\newcommand{\addresseshere}{%
  \enddoc@text\let\enddoc@text\relax
}
\newcommand{\ff}[1]{{\mathbb F}_{#1}}
\newcommand{\ffs}[1]{{\mathbb F}_{#1}^\star}
\newcommand{\ffx}[1]{\ff{#1}[x]}
\DeclareMathOperator{\Tr}{Tr}
\newcommand{\fn}{\mathbb F_{2^n}}
\def\cW{\mathcal W}
\newcommand{\ga}{\alpha}
\newcommand{\gb}{\beta}
\newcommand{\g}{\gamma}
\newcommand{\gd}{\delta}
\newcommand{\gr}{\rho}
\newcommand{\gz}{\zeta}
\newcommand{\go}{\omega}
\newcommand{\NL}{\mathcal{N}\hspace{-0.75 mm}\mathcal L}
\begin{document}

\title{Differentially low uniform permutations from known 4-uniform functions
}


\author{
        Marco Calderini 
}


\institute{           M. Calderini\at
               \email{marco.calderini@uib.no}\\
         Department of Informatics, University of Bergen, PB 7803, 5020 Bergen, Norway 
         }

\date{Received: date / Accepted: date}

\maketitle

\begin{abstract}
Functions with low differential uniformity can be used in a block cipher as S-boxes since they have good resistance to differential attacks.
In this paper we consider piecewise constructions for permutations with low differential uniformity. In particular, we give two constructions of differentially 6-uniform functions, modifying the Gold function and the Bracken-Leander function on a subfield.
\end{abstract}

\keywords{Low differentially uniform; Boolean functions; permutations; high nonlinearity}
 \subclass{94A60 \and 11T71 \and 06E30}
%
%

\section{Introduction}
Let $n$ be a positive integer, we will denote by $\ff{2^n}$ the finite field with $2^n$ elements and its multiplicative group by $\ffs{2^n}$.
Permutation maps defined over $\ff{2^n}$ are used as the S-boxes of some symmetric cryptosystems. So, it is important to construct permutations with good cryptographic properties in order to design a cipher that can resist known attacks. In particular, among these properties we have a low differential {\color{black} and boomerang} uniformity for preventing differential {\color{black} and boomerang} attacks \cite{diff,Wagner}, high nonlinearity for avoiding linear cryptanalysis \cite{lin} and also high algebraic degree to resist higher order differential attacks \cite{hdiff}.

Over a field of even characteristic, the best differential uniformity of a function $F$ is two. Functions achieving this value are called almost perfect nonlinear (APN). 
Many works have been done on the construction of APN functions (see for instance \cite{BBMM11,30,BCCCV18,bud09,bud09-2}).
For odd values of $n$ there are known families of APN permutations; while for $n$ even there exists only one example of APN permutation over $\ff{2^6}$ \cite{dillon} and the existence of others remains an open problem. For ease of implementation, usually, the integer $n$ is required to be even in a cryptosystem.
Therefore, finding permutations with good cryptographic properties over $\ff{2^{n}}$ with $n$ even is an interesting research topic for providing more choices for the S-boxes.

The construction of low differentially uniform permutations with the highest nonlinearity over $\ff{2^{n}}$ (with $n$ even) is a difficult task. In Table \ref{tab:4diff} we give five families of primarily constructed differentially 4-uniform permutations with the best known nonlinearity. For all these primarily constructed permutations, but the Kasami function, the boomerang uniformity as been determined \cite{BOCA,calvil,mesn}.
Note that amongst these functions, the Gold, Bracken-Leander and the Bracken-Tan-Tan functions have algebraic degrees 2 or 3.

\begin{table}\caption{Primarily-constructed differentially 4-uniform permutations over $\ff{2^{n}}$ ($n$ even) with the best known nonlinearity}\label{tab:4diff}
\renewcommand{\arraystretch}{1.3}
\centering
\begin{tabular}{|c|c|c|c|c|}
\hline
{\bf Name}& $\mathbf{F(x)}$ & {\bf deg} & {\bf Conditions} & {\bf In}\\
\hline
Gold & $x^{2^i+1}$ & $2$ & $n=2k$, $k$ odd $\gcd(i,n)=2$&\cite{gold}\\
\hline
Kasami & $x^{2^{2i}-2^i+1}$ &i+1& $n=2k$, $k$ odd $\gcd(i,n)=2$&\cite{kasami}\\
\hline
Inverse& $x^{2^n-2}$ & $n-1$ & $n=2k$, $k\ge 1$ &\cite{goldwalsh}\\
\hline
Bracken-Leander &  $x^{2^{2k}+2^k+1}$ & $3$ & $n=4k$, $k$ odd & \cite{BL10} \\
\hline
 &   &  & $n = 3m$, $m$ even, $m/2$ odd, & \\
 Bracken-Tan-Tan&  $\gz x^{2^i+1} +\gz^{2^m} x^{2^{-m}+2^{m+i}}$ & $2$  &   $\gcd(n, i) = 2$, $3|m+i$ &\cite{BTT} \\
  &   &   & and $\gz$ is a primitive element of $\ff{2^n}$ & \\
\hline
\end{tabular}
\end{table}

In the last years, many works on permutations with low differential uniformity have been done.
Several constructions of differentially 4-uniform permutations have been found by modifying the inverse function on some subsets of $\ff{2^n}$ \cite{PHT17,17,19,QTTL,TCT,22,amc19,ZLS}. 
For example, some new constructions of differentially 4-uniform functions were obtained by shifting the inverse function on some subsets of $\ff{2^n}$ \cite{QTTL,TCT}. While in \cite{PHT17,amc19,ZLS} the authors change the inverse function on some subfields of $\ff{2^n}$. Moreover, all the functions constructed in those references were proved to have the optimal algebraic degree ($n-1$) and high nonlinearity.

{For the case of 6-uniform permutations, in \cite{BPpow} the authors studied the differential spectra of some power functions. In \cite{CKS}, the authors study some particular sparse permutation polynomials showing, in particular, that we can obtain 6-uniform permutation polynomials of type $x^{-1}+\g \Tr(x^r)$.
In \cite{6diffe}, the authors give two examples of 6-uniform permutations modifying the Gold function.}

In this paper, we investigate the piecewise construction as in \cite{PHT17,amc19,ZLS} by modifying the image of other well-known 4-uniform permutations on some subfields of $\ff{2^n}$. In particular, we consider the case of the Gold and of the Bracken-Leander functions. We show that in these cases it is possible to obtain permutations with differential uniformity at most 6. Moreover, if we modify these functions using the inverse function (or a function equivalent to it) on a subfield of $\ff{2^n}$, then we can obtain permutations with algebraic degree $n-1$ (which is the highest possible) and high nonlinearity. These results extend those given in \cite{6diffe}. 

The paper is organized as follows. In Section \ref{sec:1}, we give some notations and preliminaries. In Section \ref{sec:2}, we give just a result for obtaining 4-uniform permutation using the piecewise construction with APN permutations.  
In Section \ref{sec:3}, we report our study on the piecewise construction coming from the Gold and Bracken-Leander functions. We show that for these functions the construction can produce a permutation with differential uniformity at most 6. Moreover, when we use the inverse function on the subfield, the obtained function has also algebraic degree $n-1$ and high nonlinearity.
{\color{black} In the last section, we give some results on the boomerang uniformity for some type of piecewise permutations. In particular, we give an upper bound on the boomerang uniformity of piecewise permutations obtained using a 4-uniform Gold function. }

\section{Preliminaries}\label{sec:1}

Any function $F$ from $\ff{2^n}$ to itself can be represented as a univariate polynomial of degree at most $2^n-1$, that is
$$F(x)=\sum_{i=0}^{2^n-1}a_ix^i.$$ 
The {\em 2-weight} of an integer $0\le i\le 2^n-1$, denoted by $w_2(i)$, is the (Hamming) weight of its binary representation. It is well known that the algebraic degree of a function $F$ is given by
$$\deg(F)=\max\{w_2(i)\,\mid\, a_i\ne 0\}.$$
Functions of algebraic degree $1$ are called {\em affine}. Linear functions
are affine functions with constant term equal to zero and they can be represented as $L(x)=\sum_{i=0}^{n-1} a_i x^{2^i}$. For any permutation $F:\ff{2^n}\to\ff{2^n}$ it is well known that the algebraic degree can be at most $n-1$.

For any $m\ge 1$ such that $m|n$ we can define the (linear) \emph{trace function} from $\ff{2^n}$ to $\ff{2^m}$ by 
$$
\Tr_m^n(x)=\sum_{i=0}^{n/m-1}x^{2^{im}}.
$$
When $m=1$ we will denote $\Tr_1^n(x)$ by $\Tr(x)$.

For any function $F:\ff{2^n}\to\ff{2^n}$ we denote the {\em Walsh transform} in $a,b\in \ff{2^n}$ by 
$$
\cW_F(a,b)=\sum_{x\in \ff{2^n}}(-1)^{\Tr(ax+bF(x))}.
$$ 

With {\em Walsh spectrum} we refer to the set of all possible values of the Walsh transform. The Walsh spectrum of a vectorial Boolean function $F$ is strictly related to the notion of nonlinearity of $F$, denoted by $\NL(F)$, indeed we have
$$
\NL (F)=2^{n-1}-\frac{ 1}{2} \max_{a\in\ff{2^n},b\in\ffs{2^n}}|\cW_F(a,b)|.
$$

When $n$ is odd, it has been proved that $\NL(F) \le 2^{n-1}-2^{\frac{n-1}{2}}$;
for $n$ even, the best known nonlinearity is 
$2^{n-1}-2^{\frac{n}{2}}$, and it is conjectured that $\NL(F) \le 2^{n-1}-2^{\frac{n}{2}}$.
All the functions in Table \ref{tab:4diff} reach this value.

The concept of differential uniformity of a function $F$ is related to the number of solutions of the equation $F(x+a)+F(x)=b$ for $a\in\ffs{2^n}$ and $b\in\ff{2^n}$.
\begin{definition}
For a function $F$ from $\ff{2^n}$ to itself, and any $a\in\ffs{2^n}$ and $b\in\ff{2^n}$, we denote by $\gd_F (a, b)$ the number of solutions of the equation $F (x + a) + F (x) = b$. The maximum $$\gd_F=\max_{a\in\ffs{2^n},b\in\ff{2^n}}\gd_F(a,b)$$ is called the differential uniformity of $F$, and F is said to be differentially $\gd_F$-uniform.
\end{definition}

A function F is called almost perfect nonlinear (APN) if $\gd_F=2$. Note that the possible minimum value of $\gd_F$ is 2, since if $x$ is a solution of  $F (x + a) + F (x) = b$, then  $x + a$ is also a solution of the equation.

{\color{black} In \cite{cid}, Cid et al. introduced the concept of Boomerang Connectivity Table for a permutation $F$ over $\fn$. Next, in \cite{BOCA} the authors introduced the notion of boomerang uniformity.
\begin{definition}
Let $F$ be a permutation over $\ff{2^n}$, and $a,b$ in $\ff{2^n}$.
The Boomerang Connectivity Table (BCT) of $F$ is given by a $2^n \times 2^n$ table $T_F$, in which the entry for the position $(a, b)$ is given by
$$T_F (a, b) = |\{x\in\fn\,:\, F^{-1}(F (x) + a) + F^{-1}(F (x + b) + a) = b\}|.$$
Moreover, the value
$$\gb_F = \max_{a,b\in\ffs{2^n}}|\{x\in\fn\,:\, F^{-1}(F (x) + a) + F^{-1}(F (x + b) + a) = b\}|$$
is called the boomerang uniformity of $F$, or we call $F$ a {\em boomerang $\gb_F$-uniform} function.
\end{definition}
In \cite{cid}, the authors show that $\gd_F\le\gb_F$ for any function $F$. Moreover, $\gd_F=2$ if and only if $\gb_F=2$. So, APN permutations offer an optimal resistance to both differential and boomerang attacks.
}

There are several equivalence relations of functions for which the differential uniformity and the Walsh spectrum (and in particular the nonlinearity) are preserved.
Two functions $F$ and $F'$ from $\ff{2^n}$ to itself are called:
\begin{itemize}
\item {\it affine equivalent} if $F' = A_1 \circ F \circ A_2$ where the mappings $A_1,A_2:\ff{2^n}\to\ff{2^n}$ are affine permutations;
\item {\it extended affine equivalent} (EA-equivalent) if $F' =F'' + A$, where the mappings $A:\ff{2^n}\to\ff{2^n}$ is affine and
$F''$ is affine equivalent to $F$;
\item {\it Carlet-Charpin-Zinoviev equivalent} (CCZ-equivalent) if for some affine permutation $\mathcal L$ of $\ff{2^n}\times \ff{2^n}$ the image of the graph of $F$ is the graph of $F'$, that is, $\mathcal L(G_F) = G_{F'}$, where $G_F = \{(x,F(x)) \,:\, x \in\ff{2^n}\}$ and $G_{F'} = \{(x,F'(x)) \,:\, x \in\ff{2^n}\}$.
\end{itemize}

Obviously, affine equivalence is included in the EA-equivalence, and it is also well known that EA-equivalence is a particular case of CCZ-equivalence and every permutation is CCZ-equivalent to its inverse \cite{ccz}.

The algebraic degree is invariant for the affine equivalence and also for the EA-equivalence for nonlinear functions, but not for the CCZ-equivalence. The boomerang uniformity is preserved by  affine equivalence and inverse transformation, but not from  EA- and CCZ-equivalence in general \cite{BOCA}.\\

Some secondary construction methods have been introduced to find new low differentially uniform functions from the known ones. 
For example, some constructions of differentially 4-uniform permutations over $\ff{2^{2m}}$, of degree $m+1$, by using Gold APN functions over $\ff{2^{2m+1}}$, were obtained by Li and Wang in \cite{LW}, inspired by the idea introduced by Carlet in \cite{C11}. 
In \cite{CTTL}, Carlet et al. give a construction of a differentially 4-uniform function over $\ff{2^{2m}}$ from the inverse permutation over $\ff{2^{2m-1}}$.
In the following we will denote the inverse function by $x^{-1}$ (with $0^{-1}:=0$).

The inverse function is APN over $\ff{2^n}$ when $n$ is odd, and is differentially 4-uniform when $n$ is even \cite{goldwalsh}. In the recent years, it has been found that a large number of differentially 4-uniform permutations can be obtained by modifying the inverse function on some subset of $\ff{2^n}$. Among these constructions we have

\begin{itemize}
\item The function $$
F_1(x)=\begin{cases}
x^{-1}+1&\mbox{ if $x\in S$}\\
x^{-1}&\mbox{ if $x\notin S$}
\end{cases}
$$
where $S$ is some specific subset of $\ff{2^n}$ ($n$ even). Some classes of differentially 4-uniform permutations of this form are given in \cite{17,19,QTTL,22,ZZ}
\item The function $$
F_{t_1,t_2}(x)=\begin{cases}
t_1x^{-1}+t_2&\mbox{ if $x\in \ff{2^s}$}\\
x^{-1}&\mbox{ if $x\notin \ff{2^s}$}
\end{cases}
$$
where $n=sm$ with $s$ even and $n/s$ odd, $t_1,t_2\in\ff{2^s}$ and $t_1\ne 0$ \cite{ZLS}.
\item The function $$
F_{\ga,\gb}(x)=\begin{cases}
\gb(x+1)^{-1}+\ga&\mbox{ if $x\in \ff{2^s}$}\\
x^{-1}&\mbox{ if $x\notin \ff{2^s}$}
\end{cases}
$$
where $n=sm$ with $s$ even and $n/s$ odd, $\ga,\gb\in\ff{2^s}$ and $\gb\ne 0$ \cite{PHT17}.
\item The function $$
F_\g(x)=\begin{cases}
(\g x)^{-1}&\mbox{ if $x\in U$}\\
x^{-1}&\mbox{ if $x\notin U$}
\end{cases}
$$
where $\Tr(\g)=\Tr(\g^{-1})=1$ and  $U$ is some specific subset of $\ff{2^n}$ ($n$ even such that $n/2$ is odd) \cite{ptw}.

\item The function $$
F_{\g}(x)=\begin{cases}
(\g x)^{-1}&\mbox{ if $x\in \ff{2^s}$}\\
x^{-1}&\mbox{ if $x\notin \ff{2^s}$}
\end{cases}
$$
where $n=sm$ with $s$ even and $n/s$ odd for any $\g\in\ffs{2^s}$, and with also $s$ odd if $\Tr(\g)=0$  \cite{amc19}.
\end{itemize}

In the next sections, we will study the piecewise construction for the case of Gold and Bracken-Leander 4-uniform permutations and we will show that it is possible to obtain permutations with low differential uniformity.
 
\section{Differentially 4-uniform piecewise functions from APN functions}\label{sec:2}

If we use APN permutations in the piecewise construction, then we can obtain a differentially 4-uniform function. This applies in particular to the case of $n$ odd for which some families of  APN permutations are known. 

\begin{proposition}\label{prop:apn4}
Let $n=sm$, with $m$ odd. Let $f$ be an APN permutation over $\ff{2^s}$ and $g\in \ff{2^s}[x]$ an APN permutation over $\ff{2^n}$. Then, the function
$$F(x)=f(x)+(f(x)+g(x))(x^{2^s}+x)^{2^n-1}=\begin{cases}
f(x)&\mbox{ if $x\in\ff{2^s}$}\\
g(x)&\mbox{ if $x\notin\ff{2^s}$}
\end{cases}$$
is a differentially 4-uniform permutation.
\end{proposition}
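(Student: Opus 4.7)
The plan is to verify in turn that $F$ is a permutation and that $\gd_F\le 4$. Since $(x^{2^s}+x)^{2^n-1}$ vanishes on $\ff{2^s}$ and equals $1$ elsewhere, $F$ coincides with $f$ on $\ff{2^s}$ and with $g$ on $\ff{2^n}\setminus\ff{2^s}$. The hypothesis $g\in\ff{2^s}[x]$ forces $g(\ff{2^s})\subseteq\ff{2^s}$; combined with the injectivity of $g$ as a permutation of $\ff{2^n}$, this yields $g(\ff{2^s})=\ff{2^s}$ and hence $g(\ff{2^n}\setminus\ff{2^s})=\ff{2^n}\setminus\ff{2^s}$. Since $f$ already permutes $\ff{2^s}$, $F$ is a permutation.

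For the differential uniformity, fix $a\in\ffs{2^n}$ and $b\in\ff{2^n}$. If $a\in\ffs{2^s}$, then $x\mapsto x+a$ preserves both $\ff{2^s}$ and its complement, and the equation $F(x+a)+F(x)=b$ reads as $f(x+a)+f(x)=b$ on $\ff{2^s}$ (at most $2$ solutions by APN-ness of $f$) and as $g(x+a)+g(x)=b$ on the complement (at most $2$ solutions in $\ff{2^n}$ by APN-ness of $g$), giving $\gd_F(a,b)\le 4$.

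The delicate case is $a\notin\ff{2^s}$. I split the solution set of $F(x+a)+F(x)=b$ into three pieces according to the position of $x$ and $x+a$: (i) both outside $\ff{2^s}$, where the equation becomes $g(x+a)+g(x)=b$; (ii) $x\in\ff{2^s}$, where it becomes $g(x+a)+f(x)=b$; (iii) $x+a\in\ff{2^s}$, which is in bijection with (ii) under $x\mapsto x+a$. Piece (i) contributes at most $2$ solutions by APN-ness of $g$, and (ii) and (iii) contribute the same count, so it is enough to prove that (ii) has at most one solution.

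This last bound is the main obstacle, and the hypothesis $m$ odd enters precisely here. Suppose for contradiction that distinct $x_1,x_2\in\ff{2^s}$ both satisfy $g(x+a)+f(x)=b$. Applying the Frobenius $y\mapsto y^{2^s}$, which fixes the coefficients of $g$ and the values $f(x_i),x_i\in\ff{2^s}$, gives $g(x_i+a^{2^s})+f(x_i)=b^{2^s}$; summing with the original equation produces $g(x_i+a)+g(x_i+a+c)=b+b^{2^s}$, where $c:=a+a^{2^s}$. Since $a\notin\ff{2^s}$ we have $c\ne 0$, and $x_1+a,\,x_2+a$ are two distinct solutions of $g(z)+g(z+c)=b+b^{2^s}$. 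APN-ness of $g$ then forces $\{x_1+a,x_2+a\}$ to be a full fiber of $D_c g$, so $x_1+x_2=c=a+a^{2^s}\in\ff{2^s}$. This gives $a^{2^{2s}}=a$, hence $a\in\ff{2^{2s}}\cap\ff{2^n}=\ff{2^{\gcd(2s,sm)}}=\ff{2^s}$, using $m$ odd so that $\gcd(2,m)=1$. This contradicts $a\notin\ff{2^s}$, so piece (ii) contains at most one solution and $\gd_F(a,b)\le 2\cdot 1+2=4$, completing the proof.
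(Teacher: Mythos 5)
Your proposal is correct and follows essentially the same route as the paper: the case $a\in\ffs{2^s}$ is handled identically, and for $a\notin\ff{2^s}$ both arguments apply the Frobenius $y\mapsto y^{2^s}$ to the mixed equation to reduce to $D_cg=b+b^{2^s}$ with $c=a+a^{2^s}\ne 0$, then use APN-ness of $g$ together with $\ff{2^{2s}}\cap\ff{2^n}=\ff{2^s}$ (from $m$ odd) to cap the mixed solutions at one pair. The only difference is cosmetic (you bound solutions with $x\in\ff{2^s}$ while the paper bounds those with $x\in a+\ff{2^s}$, and you additionally spell out why $F$ is a permutation, which the paper leaves implicit).
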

\begin{proof}
We need to check that for any $a\in\ffs{2^n}$ and $b\in \ff{2^n}$ the equation
\begin{equation}\label{eq:der}
F(x)+F(x+a)=b
\end{equation}
admits at most 4 solutions.
First consider $a\in\ffs{2^s}$. Then we can have the equations
$$
f(x)+f(x+a)=b
$$
and
$$
g(x)+g(x+a)=b
$$
if $x\in\ff{2^s}$ or $x\notin\ff{2^s}$. In both cases we can have at most 2 solutions which implies that we can have at most 4 solutions for \eqref{eq:der}.

When $a$ is not in $\ff{2^s}$, then we can have two cases:
\begin{itemize}
\item a solution $x$ is in $\ff{2^s}$ and $x+a$ not;
\item both the solutions $x$ and $x+a$ are not in $\ff{2^s}$.
\end{itemize}
Let us count the solutions of first type, that is, we want to count the number of pairs in $\ff{2^s}\times(a+\ff{2^s})$ which are solutions of our equation. To do that, we can count the number of solutions which belong to $a+\ff{2^s}$. Then, consider $x\in a+\ff{2^s}$, from \eqref{eq:der} we obtain 
\begin{equation}\label{eq:dermista}
g(x)+f(x+a)=b.
\end{equation}
Raising \eqref{eq:dermista} to the power $2^s$ and adding the result to \eqref{eq:dermista} we have
$$
g(x)+g(x^{2^s})=b^{2^s}+b,
$$
recall that $g(x)^{2^s}=g(x^{2^s})$ since $g\in\ff{2^s}[x]$. Moreover, since $g$ is a permutation over $\ff{2^n}$, for $b\in\ff{2^s}$ we have no solution of this type. If $b\notin\ff{2^s}$, then we have $x^{2^s}=x+c$ whit $c=a^{2^s}+a\ne 0$. Thus, the equation $g(x)+g(x^{2^s})=b^{2^s}+b$ admits at most two solutions $x$ and $x+c$ (recall that $g$ is APN over $\ff{2^n}$). Since $x\in  a+\ff{2^s}$, we have that $x+c\in a^{2^s}+\ff{2^s}$ and $a^{2^s}+\ff{2^s}\cap a+\ff{2^s}=\emptyset$. Indeed, if $a^{2^s}+\ff{2^s}\cap a+\ff{2^s}\ne\emptyset$, then $a^{2^s}+a\in\ff{2^s}$ and thus $a^{2^{2s}}=a$, implying $\ff{2^{2s}}\subseteq \ff{2^n}$, which is not possible.
This implies that we can have at most one solution of \eqref{eq:dermista} in $ a+\ff{2^s}$. This leads to at most two solutions
of  \eqref{eq:der}.

For the second case, we have that \eqref{eq:der} is given by
$$
g(x)+g(x+a)=b,
$$
which admits at most 2 solutions. This implies that for $a\notin\ff{2^s}$ Equation \eqref{eq:der} admits at most 4 solutions.
\qed\end{proof}
\begin{remark}
Note that if in Proposition \ref{prop:apn4} one APN function is a permutation and the other one no, then the piecewise function is still differentially 4-uniform but it may not be a permutation. In particular, the restriction on $m$ odd is necessary for having $g$ to be an APN permutation (see for instance \cite{hou}).
\end{remark}

\section{Differentially 6-uniform permutations from the Gold and Bracken-Leander functions}\label{sec:3}

In this section we will study the piecewise construction for the case of Gold and the Bracken-Leander function. Before to consider these functions, we will give a general result on the differential uniformity of some piecewise functions.

\begin{theorem}\label{th:main}
Let $n=sm$ for some positive integers $s$ and $m$. Let $f$ and $g$ be two polynomials with coefficients in $\ff{2^s}$, that is $f,g\in\ff{2^s}[x]$, and $g$ permuting $\ff{2^n}$. Suppose that:
\begin{itemize}
\item[(H1)] $f$ is a $\gd_f$-uniform function over $\ff{2^s}$;
\item[(H2)] $g$ is a $\gd_g$-uniform function over $\ff{2^n}$;
\item[(H3)] for any $a\in\ffs{2^s}$ and $b\in\ff{2^s}$ the equation $g(x)+g(x+a)=b$ has no solution in $\ff{2^n}\setminus\ff{2^s}$.
\end{itemize}
Then, the function $$F(x)=f(x)+(f(x)+g(x))(x^{2^s}+x)^{2^n-1}=\begin{cases}
f(x)&\mbox{ if $x\in\ff{2^s}$}\\
g(x)&\mbox{ if $x\notin\ff{2^s}$}
\end{cases}$$
is such that
$$
\gd_F(a,b)\le\begin{cases}
\max\{\gd_f,\gd_g\}&\mbox{ if $a\in\ff{2^s}$}\\
\gd_g+2&\mbox{ if $a\notin\ff{2^s}$}.
\end{cases}$$
\end{theorem}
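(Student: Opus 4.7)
The plan is to bound $\gd_F(a,b)$ by splitting the equation $F(x)+F(x+a)=b$ according to whether $x$ and $x+a$ lie in $\ff{2^s}$ or in its complement, and then applying hypotheses (H1)--(H3) to the resulting sub-equations. I will treat the two cases $a\in\ffs{2^s}$ and $a\notin\ff{2^s}$ separately.

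When $a\in\ffs{2^s}$, the elements $x$ and $x+a$ are always on the same side of the partition: either both in $\ff{2^s}$, giving $f(x)+f(x+a)=b$, or both outside, giving $g(x)+g(x+a)=b$. If $b\in\ff{2^s}$, then (H3), applied with the current $a$, kills every solution of the second equation outside $\ff{2^s}$, so the total is at most $\gd_f$. If $b\notin\ff{2^s}$, the first equation has no solution at all because $f\in\ff{2^s}[x]$ forces $f(x)+f(x+a)\in\ff{2^s}$, so only the second equation contributes and the total is at most $\gd_g$. Either way the count is bounded by $\max\{\gd_f,\gd_g\}$.

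When $a\notin\ff{2^s}$, solutions fall into three disjoint classes: (ii) $x\in\ff{2^s}$ with $x+a\notin\ff{2^s}$, satisfying $f(x)+g(x+a)=b$; (iii) $x\notin\ff{2^s}$ with $x+a\in\ff{2^s}$, satisfying $g(x)+f(x+a)=b$; and (iv) both $x,x+a\notin\ff{2^s}$, satisfying $g(x)+g(x+a)=b$. The fourth possibility of both in $\ff{2^s}$ is excluded since $a\notin\ff{2^s}$. Type (iv) contributes at most $\gd_g$ solutions directly from (H2). Types (ii) and (iii) correspond to one another under $x\leftrightarrow x+a$, so their combined count equals $2k$, where $k:=|\{y\in\ff{2^s}\,:\,f(y)+g(y+a)=b\}|$. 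The target bound $\gd_g+2$ will follow as soon as I establish $k\le 1$.

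The main step---and the place where (H3) is used most essentially---is the proof that $k\le 1$. I will argue by contradiction: suppose $y_1\ne y_2$ are two distinct elements of $\ff{2^s}$ with $f(y_i)+g(y_i+a)=b$ for $i=1,2$. Adding these two equations eliminates $b$ and yields $f(y_1)+f(y_2)=g(y_1+a)+g(y_2+a)$. The left-hand side lies in $\ff{2^s}$ because $f\in\ff{2^s}[x]$ maps $\ff{2^s}$ to itself, so the right-hand side lies in $\ff{2^s}$ too. Setting $e:=y_1+y_2\in\ffs{2^s}$ and $w:=y_1+a$, this rewrites as $g(w)+g(w+e)=b'$ with $b'\in\ff{2^s}$, $e\in\ffs{2^s}$, and $w\in a+\ff{2^s}\subseteq\ff{2^n}\setminus\ff{2^s}$; this is exactly the configuration forbidden by (H3). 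The only real obstacle is the bookkeeping of the various subcases; once the correct reduction of types (ii) and (iii) to a single equation over $\ff{2^s}$ is identified, the non-trivial role of (H3) makes the mixed-type count collapse cleanly to at most $2$.
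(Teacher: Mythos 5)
Your proof is correct and follows essentially the same strategy as the paper: the same case split on the position of $a$, the same reduction of the mixed-type solutions to a single equation $f(y)+g(y+a)=b$ over $\ff{2^s}$, and the same use of (H3) to show that two such solutions would force a forbidden relation $g(w)+g(w+e)=b'$ with $e\in\ffs{2^s}$, $b'\in\ff{2^s}$ and $w\notin\ff{2^s}$. The only difference is that you reach this contradiction by directly adding the two hypothetical equations, whereas the paper first applies the Frobenius $x\mapsto x^{2^s}$ and counts the intersection $g(a+\ff{2^s})\cap(b+\ff{2^s})$; your shortcut is a mild simplification of the same argument.
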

\begin{proof}
We need to check the number of solutions of the equation
\begin{equation}\label{eq:derF}
F(x)+F(x+a)=b.
\end{equation}
Suppose that $a\in\ffs{2^s}$. Then, we can have that both the solutions $x$ and $x+a$ are in $\ff{2^s}$ or none is in $\ff{2^s}$. In the first case, \eqref{eq:derF} becomes
$$
f(x)+f(x+a)=b,
$$
which has at most $\gd_f$ solutions if $b\in\ff{2^s}$ and none when $b\notin\ff{2^s}$. 

In the second case, we have the equation
$$
g(x)+g(x+a)=b.
$$
From (H3) we have no solution in $\ff{2^n}\setminus\ff{2^s}$ if $b\in\ff{2^s}$. 
If $b\notin\ff{2^s}$ we can have at most $\gd_g$ solutions. Then, for $a\in\ffs{2^s}$ we have at most $\gd=\max\{\gd_f,\gd_g\}$ solutions for Equation~\eqref{eq:derF} for any $b$.

Consider, now, $a \notin \ff{2^s}$. We can have two cases:
\begin{itemize}
\item[(i)] a solution $x$ is in $\ff{2^s}$ and $x+a$ not;
\item[(ii)] both the solutions $x$ and $x+a$ are not in $\ff{2^s}$.
\end{itemize}
We want to count the number of pairs $(x,x+a)$ in $\ff{2^s}\times(a+\ff{2^s})$ which are solutions of Equation \eqref{eq:derF}. Without loss of generality, we can suppose $x\in a+\ff{2^s}$. Then  \eqref{eq:derF} becomes
\begin{equation}\label{eq:dermistaF}
g(x)+f(x+a)=b.
\end{equation}
From this, raising \eqref{eq:dermistaF} by $2^s$ and adding the result to \eqref{eq:dermistaF}, we obtain the equation $g(x)^{2^s}+g(x)=b^{2^s}+b$. 
Denoting by $y=g(x)$, we obtain 
$$
y^{2^s}+y=b^{2^s}+b.
$$
The solutions of this last equation are the elements of the coset $b+\ff{2^s}$. Since we supposed that $x\in a+\ff{2^s}$ and $g$ permutes $\ff{2^n}$ we need to check how many elements we have in $g(a+\ff{2^s})\cap (b+\ff{2^s})$, where $g(a+\ff{2^s}):=\{g(x)\,:\,x\in a+\ff{2^s}\}$.
Suppose that $|g(a+\ff{2^s})\cap (b+\ff{2^s})|\ge 2$. Then, there exist $z_1,z_2,w_1,w_2\in\ff{2^s}$ such that $b+z_1=g(a+w_1)$, $b+z_2=g(a+w_2)$ and $z_1\ne z_2$, $w_1\ne w_2$. Thus, 
$$g(a+w_1)+g(a+w_2)=z_1+z_2.$$ Denoting by $x'= a+w_1$ and $a'=w_1+w_2$, we obtain that 
$$g(x')+g(x'+a')=z_1+z_2.$$ 
Thus we would have a solution $x'\notin\ff{2^s}$ for the equation $g(x)+g(x'+a')=z_1+z_2$, which is not possible by (H3). Therefore, $|g(a+\ff{2^s})\cap (b+\ff{2^s})|\le 1$, implying that in $\ff{2^s}\times(a+\ff{2^s})$ we have at most one pair $(x,x+a)$ which can be solutions of \eqref{eq:derF}.

For the second case we have the equation
$$
g(x)+g(x+a)=b,
$$
which admits at most $\gd_g$ solutions for any $b$. Then, in total for the case $a\notin\ff{2^s}$ Equation \eqref{eq:derF} admits at most $\gd_g+2$ solutions.
\qed\end{proof}

It is easy to note that if also $f$ permutes $\ff{2^s}$, then $F$ is a permutation over $\ff{2^n}$.
\begin{remark}
Note that Proposition \ref{prop:apn4} and Theorem \ref{th:main} use a similar analysis. However, the proof of Proposition \ref{prop:apn4} relies strictly on the APN property of $g$ for the case $x\in\ff{2^s}$ and $x+a\notin\ff{2^s}$ and not on Condition (H3). So we cannot include the result of Proposition \ref{prop:apn4} into Theorem \ref{th:main}.
\end{remark}

\begin{remark}
When the function $g$ in Theorem~\ref{th:main} is a power function, i.e. $g(x)=x^d$, Condition (H3) can be checked only for $a=1$. Indeed, for any nonzero $a$ we have that studying
$$
x^d+(x+a)^d=b
$$
is equivalent to study
$$
\left(\frac x a\right)^d+\left(\frac x a+1\right)^d=\frac{b}{a^d}.
$$

Thus, for the case of Gold and Bracken-Leander functions we need to check if the equation $x^d+(x+1)^d=b$ has no solution in $\ff{2^n}\setminus\ff{2^s}$ whenever $b$ is in $\ff{2^s}$.
\end{remark}
For the Gold function we have the following.

\begin{lemma}\label{lm:1}
Let $n=sm$ with $s$ even and $m$ odd. Let $k$ be such that $\gcd(k,n)=2$. For any $b\in \ff{2^s}$ the equation 
$$
x^{2^k}+x=b
$$
does not admit any solution $x$ in $\ff{2^n}\setminus\ff{2^s}$.
\end{lemma}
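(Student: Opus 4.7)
The plan is to show that any solution $x\in\ff{2^n}$ of $x^{2^k}+x=b$ with $b\in\ff{2^s}$ must satisfy $x^{2^s}+x=0$, i.e. must lie in $\ff{2^s}$. The argument proceeds in two stages: first localise $y:=x^{2^s}+x$ inside $\ff{4}$ using the hypothesis on $\gcd(k,n)$, then force $y=0$ using a relative trace computation that exploits the parity of $m$.

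Concretely, I would start from $x^{2^k}+x=b$ and raise both sides to the power $2^s$. Since $b\in\ff{2^s}$ we have $b^{2^s}=b$, so subtracting the two equations yields $(x^{2^s}+x)^{2^k}=x^{2^s}+x$. Setting $y=x^{2^s}+x$ this means $y\in\ff{2^k}\cap\ff{2^n}=\ff{2^{\gcd(k,n)}}=\ff{4}$. Because $s$ is even, $\ff{4}\subseteq\ff{2^s}$, hence $y\in\ff{2^s}$ and in particular $y^{2^s}=y$.

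Next I would compute the relative trace $\Tr_s^n(y)$ in two different ways. On one hand, by the telescoping identity
\[
\Tr_s^n(x^{2^s}+x)=\sum_{i=0}^{m-1}\bigl(x^{2^{(i+1)s}}+x^{2^{is}}\bigr)=x^{2^{ms}}+x=x^{2^n}+x=0,
\]
so $\Tr_s^n(y)=0$. On the other hand, since $y\in\ff{2^s}$, each of the $m$ summands in $\Tr_s^n(y)=\sum_{i=0}^{m-1}y^{2^{is}}$ equals $y$, so $\Tr_s^n(y)=my=y$, using that $m$ is odd. Combining these two equalities gives $y=0$, so $x^{2^s}=x$, i.e. $x\in\ff{2^s}$, which contradicts the assumption that $x\in\ff{2^n}\setminus\ff{2^s}$.

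There is no real obstacle here; the only subtlety is recognising that the hypothesis $\gcd(k,n)=2$ together with $s$ even is what places $y$ inside $\ff{2^s}$ (so that the odd $m$ can produce the collapse $\Tr_s^n(y)=y$). Note in particular that the assumption $\gcd(k,n)=2$ is used only to guarantee $\ff{2^{\gcd(k,n)}}\subseteq\ff{2^s}$; the argument would go through whenever $\gcd(k,n)$ divides $s$.
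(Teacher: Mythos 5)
Your proof is correct and follows essentially the same route as the paper: both raise the equation to the power $2^s$ to conclude that $y=x^{2^s}+x$ lies in $\ff{2^n}\cap\ff{2^k}=\ff{4}\subseteq\ff{2^s}$, and both then exploit the oddness of $m$ to force $x\in\ff{2^s}$, a contradiction. The only (minor) difference is the concluding step: the paper deduces $y^{2^s}=y$, hence $x^{2^{2s}}=x$ and $x\in\ff{2^n}\cap\ff{2^{2s}}=\ff{2^s}$, whereas you evaluate $\Tr_s^n(y)$ by telescoping to get $y=0$ directly --- both hinge on exactly the same hypothesis $m$ odd.
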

\begin{proof}
Suppose that there exists $b\in\ff{2^s}$ for which the equation admits a solution $x\notin\ff{2^s}$. Then, we have that $x^{2^k}+x=x^{2^{k+s}}+x^{2^s}$, 
which also implies $(x^{2^s}+x)^{2^k}=x^{2^s}+x$. Then, $x^{2^s}+x\in\ff{2^n}\cap\ff{2^k}=\ff{4}\subset\ff{2^s}$. Since $m$ is odd and $x\notin\ff{2^s}$ this is not possible. Indeed, $x^{2^s}+x\in\ff{2^s}$ implies $(x^{2^s}+x)^{2^s}=x^{2^s}+x$ and thus $x^{2^{2s}}=x$, and then $x\in\ff{2^n}\cap\ff{2^{2s}}=\ff{2^s}$.
\qed\end{proof}

Thus, we have immediately the following result.
\begin{theorem}\label{th:6diff1}
Let $n=sm$ with $s$ even such that $s/2$ and $m$ are odd. Let $k$ be such that $\gcd(k,n)=2$ and $f$ be at most differentially 6-uniform permutation over $\ff{2^s}$. Then
$$F(x)=f(x)+(f(x)+x^{2^k+1})(x^{2^s}+x)^{2^n-1}=\begin{cases}
f(x)&\mbox{ if $x\in\ff{2^s}$}\\
x^{2^k+1}&\mbox{ if $x\notin\ff{2^s}$}
\end{cases}$$
is a differentially 6-uniform permutation over $\ff{2^n}$.
\end{theorem}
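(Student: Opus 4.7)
The plan is to deduce Theorem~\ref{th:6diff1} as a direct application of the general result Theorem~\ref{th:main}, with $g(x)=x^{2^k+1}$ in the role of the ``outer'' function and the given $f$ in the role of the ``inner'' function. I therefore need to verify the three hypotheses (H1), (H2), (H3) and, in addition, that $g$ indeed permutes $\ff{2^n}$.

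Hypothesis (H1) is given by assumption, since $f$ is at most $6$-uniform on $\ff{2^s}$. For (H2) I would invoke the classical fact that the Gold function $x^{2^k+1}$ has differential uniformity $2^{\gcd(k,n)}$ on $\ff{2^n}$, so with $\gcd(k,n)=2$ we get $\gd_g=4$. For the permutation property I would use that $x^{2^k+1}$ permutes $\ff{2^n}$ iff $n/\gcd(k,n)$ is odd: here $n/\gcd(k,n)=sm/2=(s/2)m$, which is odd by the hypotheses that $s/2$ and $m$ are odd. Note that these are exactly the conditions under which the Gold function appears in Table~\ref{tab:4diff}.

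The only nontrivial verification is (H3). Using the remark after Theorem~\ref{th:main}, since $g$ is a power function it suffices to show that for every $b\in\ff{2^s}$ the equation
$$
x^{2^k+1}+(x+1)^{2^k+1}=b
$$
has no solution in $\ff{2^n}\setminus\ff{2^s}$. Expanding the left-hand side collapses it to $x^{2^k}+x+1=b$, i.e.\ $x^{2^k}+x=b+1$, and since $b+1\in\ff{2^s}$ Lemma~\ref{lm:1} (whose hypotheses $s$ even, $m$ odd, $\gcd(k,n)=2$ are all satisfied) forbids any solution outside $\ff{2^s}$.

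With the three hypotheses in hand, Theorem~\ref{th:main} yields $\gd_F(a,b)\le\max\{\gd_f,\gd_g\}=\max\{6,4\}=6$ for $a\in\ff{2^s}$ and $\gd_F(a,b)\le\gd_g+2=4+2=6$ for $a\notin\ff{2^s}$, so $F$ is at most $6$-uniform. Finally, $F$ is a permutation because both $f$ and $g$ are permutations on their respective domains $\ff{2^s}$ and $\ff{2^n}\setminus\ff{2^s}$ (the latter because $g$ permutes $\ff{2^n}$ and stabilizes $\ff{2^s}$, as $g\in\ff{2^s}[x]$). There is no real obstacle here: once one has set up Theorem~\ref{th:main} and Lemma~\ref{lm:1}, the present statement is essentially a bookkeeping exercise, the only subtlety being to track why the parity conditions on $s/2$ and $m$ are precisely what is needed for both the Gold permutation property and the hypotheses of Lemma~\ref{lm:1}.
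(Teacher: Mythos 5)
Your proof is correct and follows essentially the same route as the paper: verify (H3) via Lemma~\ref{lm:1} (the derivative of the Gold function at $1$ collapses to $x^{2^k}+x+1$) and then apply Theorem~\ref{th:main} to get the bound $\max\{6,4\}=6$ resp.\ $4+2=6$. You supply a few details the paper leaves implicit (the permutation criterion $n/\gcd(k,n)$ odd and the explicit expansion of the derivative), but there is no substantive difference in approach.
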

\begin{proof}
Lemma~\ref{lm:1} implies that we have no solution in $\ff{2^n}\setminus\ff{2^s}$ of $x^{2^k}+(x+1)^{2^k}=b$ when $b\in\ff{2^s}$. Then, we have our claim from Theorem~\ref{th:main}.
\qed\end{proof}

For the Gold function it was considered in \cite{6diffe} the case of changing its image on the subfield $\ff{4}$ composing the function with the cycle $(1,\go,\go^2)$, where $\go^2=\go+1$; or  modifying it using $f(x)=x^{2^k+1}+1$ over a subfield $\ff{2^s}$, with $s/2$ odd. Theorem~\ref{th:6diff1} generalizes the results given in \cite{6diffe}.

Also for the case of Bracken-Leander function it is possible to prove a similar result as in Lemma~\ref{lm:1}.
\begin{lemma}\label{lm:4k}
Let $n=4k=sm$ with $k$ and $m$ odd. For any $b\in \ff{2^s}$ the equation 
\begin{equation}\label{eqn:1}
x^{2^{2k}+2^k} +x^{2^{2k}+1} +x^{2^k+1} +x^{2^{2k}} +x^{2^k} +x =b
\end{equation}
does not admit any solution $x$ in $\ff{2^n}\setminus\ff{2^s}$.
\end{lemma}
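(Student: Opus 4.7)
The plan is to recognise that the polynomial on the left is essentially the derivative at $1$ of the Bracken-Leander function $F(x) = x^d$ with $d = 2^{2k}+2^k+1$. Expanding $(x+1)^d = (x+1)(x^{2^k}+1)(x^{2^{2k}}+1)$ and simplifying gives
\[
L(x) := (x+1)^d + x^d + 1 = x^{2^{2k}+2^k}+x^{2^{2k}+1}+x^{2^k+1}+x^{2^{2k}}+x^{2^k}+x,
\]
so the equation in the lemma is exactly $L(x)=b$. Two consequences are then immediate. First, $L(x+1)=L(x)$, so the solution set $S_b := \{X\in\ff{2^n} : L(X)=b\}$ is a disjoint union of pairs $\{X,X+1\}$. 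Second, since $F$ is differentially $4$-uniform (Table~\ref{tab:4diff}), we have $|S_b|\le 4$; combined with the pairing, this means $S_b$ consists of at most two such pairs.

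Now assume $x\in\ff{2^n}\setminus\ff{2^s}$ solves $L(x)=b$ with $b\in\ff{2^s}$, and let $\sigma(z):=z^{2^s}$. Because $L$ has coefficients in $\ff{2}\subset\ff{2^s}$ and $b^{2^s}=b$, $\sigma$ maps $S_b$ into itself; being a bijection of $\ff{2^n}$, it in fact permutes $S_b$. Moreover $\sigma$ commutes with $z\mapsto z+1$, so it respects the pair partition. The induced action on the set of pairs (of cardinality at most $2$) is a homomorphism from the cyclic group $\langle\sigma\rangle$ of order $m=n/s$ into a group of order at most $2$; since $m$ is odd, the image must be trivial. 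Hence $\sigma$ fixes each pair setwise, and in particular $x^{2^s}\in\{x,x+1\}$.

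The first possibility $x^{2^s}=x$ contradicts $x\notin\ff{2^s}$ directly. In the second, $x^{2^s}=x+1$, we compute $x^{2^{2s}}=(x+1)^{2^s}=x^{2^s}+1=x$, so $x\in\ff{2^{\gcd(2s,n)}}$. Using $m$ odd, $\gcd(2s,n)=s\cdot\gcd(2,m)=s$, and again $x\in\ff{2^s}$, the required contradiction. The only nontrivial step is the opening identification of $L$ with the shifted derivative $F(x+1)+F(x)+1$; once that is in hand, the Frobenius-orbit argument parallels Lemma~\ref{lm:1} almost mechanically, with the pairing $L(x+1)=L(x)$ playing the role the $\ff{4}$-coset structure played there.
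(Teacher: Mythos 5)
Your proof is correct, and it takes a genuinely different route from the paper's. The paper reproves the solution structure from scratch, following the trace computation of Bracken and Leander: it applies $\Tr^{4k}_k$ to reduce \eqref{eqn:1} to $(x+x^{2^{2k}})^2+(c+1)(x+x^{2^{2k}})=b'$, splits into the cases $c=1$ and $c\ne 1$, and after several substitutions arrives at an equation $x^2+x=d$ with $d\in\ff{2^s}$, which forces $x\in\ff{2^s}$. You instead identify the left-hand side as $F(x)+F(x+1)+1$ for the Bracken--Leander power function $F(x)=x^{2^{2k}+2^k+1}$, import its known differential $4$-uniformity as a black box, and exploit the symmetry $L(x+1)=L(x)$ together with the fact that $\mathrm{Gal}(\ff{2^n}/\ff{2^s})$ is cyclic of odd order $m$: an odd-order cyclic group acting on a set of at most two pairs must fix each pair setwise, whence $x^{2^s}\in\{x,x+1\}$, and either alternative forces $x\in\ff{2^s}$ (the second via $\gcd(2s,sm)=s$ for $m$ odd). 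Every step checks out, including the expansion $(x+1)^d+x^d+1=L(x)$ and the bound of at most two pairs; using the $4$-uniformity is not circular, since that is an external result of \cite{BL10} which the present lemma is not needed to establish. What your argument buys is brevity and generality: it applies verbatim to any differentially $4$-uniform power function $x^d$ and any subfield $\ff{2^s}$ with $n/s$ odd, because $x^d+(x+1)^d$ automatically has coefficients in $\ff{2}$ and is invariant under $x\mapsto x+1$; in particular it would also recover Lemma~\ref{lm:1}. What the paper's computation buys is self-containment --- it relies only on the \emph{method} of \cite{BL10}, not on its theorem --- and it displays the explicit reductions reused elsewhere in that line of work. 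Both are valid; yours is the cleaner argument for this particular lemma.
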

\begin{proof}
The proof is obtained following similar steps to those used in \cite[Theorem~1]{BL10} for proving the differential 4-uniformity of the map $x^{2^{2k}+2^k+1}$.

Denoting by $\Tr^{4k}_k$ the trace map from $\ff{2^{4k}}$ to $\ff{2^k}$, since  $$\Tr^{4k}_k (x^{2^{2k}+2^k}+x^{2^{2k}+1}+x^{2^k+1}+x^{2^{2k}}+x^{2^k})=0,$$  from \eqref{eqn:1} we obtain $\Tr^{4k}_k(x)= x^{2^{3k}}+x^{2^{2k}}+x^{2^{k}}+x=\Tr^{4k}_k(b)=c$ and thus $x^{2^{2k}}+x^{2^k}=x^{2^{3k}}+x+c$. From this, Equation \eqref{eqn:1} can be rewritten as
\begin{equation}\label{eqn:2}
x^{2^{2k}+2^k} +x(x^{2^{3k}}+x+c) +x^{2^{3k}} =c+b
\end{equation}

Now, using always the fact that  $x^{2^{3k}}+x^{2^{2k}}+x^{2^{k}}+x=c$, we have that raising \eqref{eqn:2} to the power ${2^{2k}}$ and adding the result to Equation \eqref{eqn:2} we obtain
\begin{equation}\label{eqn:3}
 (x+x^{2^{2k}})^2+(c+1)(x+x^{2^{2k}})=b^{2^{2k}}+b+c=b'.
\end{equation}
Note that, since $b\in \ff{2^s}$ then also $c\in\ff{2^s}$ and $b'\in\ff{2^s}$. 

Now, suppose that $c=1$, then Equation \eqref{eqn:3} becomes
$$
x+x^{2^{2k}}={b'}^{2^{-1}}=b''.
$$
Thus, we can substitute $x^{2^{2k}}= x+b''$ in Equation \eqref{eqn:1} obtaining
\begin{equation}\label{eqn:4}
x^2+{b''}(x+x^{2^k})+x^{2^k}=b+b''.
\end{equation}

If we raise \eqref{eqn:4} by ${2^{k}}$ and we add it to \eqref{eqn:4}, we have
$$
(x+x^{2^k})^2+(b''+b''^{2^k}+1)(x+x^{2^k})=b''^{2^k+1}+b+b^{2^k}+b''^{2^k}.
$$
Noting that $b''+b''^{2^k}=  (x+x^{2^{2k}})+ (x+x^{2^{2k}})^{2^k}=c=1$ we have 
$$
x+x^{2^k}=({b''^{2^k+1}+b''^{2^k}+b+b^{2^k}})^{2^{-1}}=t.
$$
Now, substituting $x^{2^k}=x+t$ in \eqref{eqn:4} we obtain the equation
$$
x^2+x=b+b''+{b''}t+t.
$$
Since $b+b''+{b''}t+t\in \ff{2^s}$ we have that $x^2+x=x^{2^{s+1}}+x^{2^s}$. Therefore,  $x^{2^s}+x\in\ff{2}\subset\ff{2^s}$. This is not possible since it would imply $x=x^{2^{2s}}$ and $x\in \ff{2^n}\cap\ff{2^{2s}}=\ff{2^s}$.
Then, if $c=1$ we cannot have solutions $x\notin\ff{2^s}$.

Consider now, $c\ne 1$. Substituting $x=(c+1)y$ in \eqref{eqn:3} we obtain 
$$
(c+1)^2 [(y+y^{2^{2k}})^2+(y+y^{2^{2k}})]=b'.
$$
Note that $c\in\ff{2^k}\cap\ff{2^s}$ and then if $x\notin\ff{2^s}$ so do $y$. Now, we have that $y+y^{2^{2k}}$ is a solution of the equation $z^2+z=b'/(c+1)^2$. Then, $y+y^{2^{2k}}=t$ or $t+1$ for some $t \in \ff{2^s}\cap\ff{2^{2k}}$ (note that $b'\in\ff{2^s}$ and we cannot have solution in $\ff{2^n}\setminus\ff{2^s}$ for $z^2 +z=b'$).\\
If $y+y^{2^{2k}}=t$ then we can substitute $y^{2^{2k}}=y+t$ in Equation (\ref{eqn:1}) giving
\begin{equation} \label{eqn:8}
(c+1)^{2}((y+y^{2^k})t+y^2)+(c+1)(y^{2^k}+t)=b.
\end{equation}
Adding Equation \eqref{eqn:8} to itself raised by ${2^k}$ we get
\begin{eqnarray}
(c+1)^{2}(y+y^{2^k})^2 + [(c+1)^{2}(t+t^{2^k})+(c+1)](y+y^{2^k}) \nonumber \\
 +(c+1)^{2}t^{2^k+1}+(c+1)t^{2^{k}}+b+b^{2^k}&=&0.  \label{eqn:10}
\end{eqnarray}
Since $x=(c+1)y$ we have $c=\Tr^{4k}_k(x)=(c+1)\Tr^{4k}_k(y)$ and $\Tr^{4k}_k(y)=t^{2^k}+t$. Therefore, $t^{2^k}+t=c/(c+1)$ and 
\begin{eqnarray}
(c+1)^{2}((y+y^{2^k})^2 + (y+y^{2^k})) \nonumber \\
 +(c+1)^{2}t^{2^k+1}+(c+1)t^{2^{k}}+b+b^{2^k}&=&0.  \label{eqn:11}
\end{eqnarray}
Note that also considering $y+y^{2^{2k}}=t+1$ we would obtain the same equation. 

Now, from \eqref{eqn:11} we have that $y+y^{2^k}=r$ or $r+1$ for some $r\in\ff{2^s}$.

If $y^{2^k}=y+r$ then $y^{2^{2k}}=y+r+r^{2^k}$ and, substituting in (\ref{eqn:1}), we obtain 
$$
\begin{aligned}
(c+1)^{2}((y+r+r^{2^k})(y+r)+(y+r+r^{2^k})y+(y+r)y)+(c+1)(y+r^{2^k})=&\\
(c+1)^{2}y^2+(c+1)y+(c+1)^2(r^{2^k+1}+r^2)+(c+1)r^{2^k}=b\end{aligned}$$
which is the same as
$$x^2+x+(c+1)^2(r^{2^k+1}+r^2)+(c+1)r^{2^k}=b,$$
and then $x^2+x=d$ for some $d\in\ff{2^s}$, which is not possible as seen above.

If $y^{2^k}=y+r+1$, then we would obtain
$$x^2+x+(c+1)^2(r^{2^k+1}+r^{2^k}+r^2+r)+(c+1)(r+1)^{2^k}=b.$$
Then, Equation  \eqref{eqn:1} does not admit solutions in $\ff{2^n}\setminus\ff{2^s}$ when $b\in\ff{2^s}$. 
\qed\end{proof}

\begin{theorem}\label{th:6diff2}
Let $n=4k=sm$, with $k$, $m$ odd and $s$ even. Let $f$ be at most differentially 6-uniform permutation over $\ff{2^s}$. Then
$$F(x)=f(x)+(f(x)+x^{2^{2k}+2^k+1})(x^{2^s}+x)^{2^n-1}=\begin{cases}
f(x)&\mbox{ if $x\in\ff{2^s}$}\\
x^{2^{2k}+2^k+1}&\mbox{ if $x\notin\ff{2^s}$}
\end{cases}$$
is a differentially 6-uniform permutation over $\ff{2^n}$.
\end{theorem}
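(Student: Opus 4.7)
The plan is to obtain this theorem as a direct application of Theorem~\ref{th:main} to $g(x)=x^{2^{2k}+2^k+1}$, so it suffices to verify hypotheses (H1)--(H3). Hypothesis (H1) is precisely the assumption on $f$. Hypothesis (H2) holds because the Bracken-Leander function is a differentially $4$-uniform permutation of $\ff{2^{4k}}$ when $k$ is odd (\cite{BL10}, Table~\ref{tab:4diff}); in particular $\gd_g=4$, and $g$ is a power map, so $g\in\ff{2}[x]\subseteq\ff{2^s}[x]$.

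The substantive step is (H3). I would invoke the remark following Theorem~\ref{th:main}: since $g$ is a power function, (H3) reduces to showing that
\[x^{2^{2k}+2^k+1}+(x+1)^{2^{2k}+2^k+1}=b\]
has no solution in $\ff{2^n}\setminus\ff{2^s}$ for any $b\in\ff{2^s}$. Expanding $(x+1)^{2^{2k}+2^k+1}=(x^{2^{2k}}+1)(x^{2^k}+1)(x+1)$ and cancelling the leading monomial $x^{2^{2k}+2^k+1}$ rewrites this equation as equation~\eqref{eqn:1} with right-hand side $b+1\in\ff{2^s}$, and Lemma~\ref{lm:4k} then delivers exactly the required non-solvability outside $\ff{2^s}$.

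Having verified all three hypotheses, Theorem~\ref{th:main} gives $\gd_F(a,b)\le\max\{\gd_f,\gd_g\}\le 6$ for $a\in\ff{2^s}$ and $\gd_F(a,b)\le\gd_g+2=6$ for $a\notin\ff{2^s}$, so $F$ is differentially at most $6$-uniform. That $F$ is a permutation is immediate from the remark just after Theorem~\ref{th:main}, since $f$ permutes $\ff{2^s}$ by hypothesis. The only genuine obstacle in this scheme is hypothesis (H3); all of its algebraic content is absorbed into Lemma~\ref{lm:4k}, whose case analysis on $c=\Tr^{4k}_k(x)$ is what adapts the Bracken-Leander $4$-uniformity argument of \cite{BL10} to our piecewise setting.
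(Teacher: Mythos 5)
Your proposal is correct and follows the same route as the paper: verify (H1)--(H3) for $g(x)=x^{2^{2k}+2^k+1}$, reduce (H3) to the case $a=1$ via the power-function remark, identify the expanded equation $x^{2^{2k}+2^k+1}+(x+1)^{2^{2k}+2^k+1}=b$ with equation~\eqref{eqn:1} (right-hand side $b+1\in\ff{2^s}$), and conclude by Lemma~\ref{lm:4k} and Theorem~\ref{th:main}. Your write-up is in fact slightly more explicit than the paper's, which omits the expansion and the verification of the bound $\max\{\gd_f,\gd_g\}\le 6$ and $\gd_g+2=6$.
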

\begin{proof}
From Lemma~\ref{lm:4k} we have that $x^{2^{2k}+2^k+1}+(x+1)^{2^{2k}+2^k+1}=b$ has no solution in $\ff{2^n}\setminus\ff{2^s}$ when $b\in\ff{2^s}$. So the claim follows from Theorem~\ref{th:main}.
\qed\end{proof}

From Theorem~\ref{th:6diff1} and Theorem~\ref{th:6diff2} we obtain a general construction for piecewise functions with differential uniformity at most 6. In the following, we will show that using a function $f$ which is affine equivalent to the inverse function we can obtain a permutation of maximal degree $n-1$ with high nonlinearity.


We, first, give the following result.

\begin{lemma}\label{lm:deg}
Let $F$ be a function defined over $\ff{2^n}$. Then, $F$ in its polynomial representation has a term of algebraic degree $n-1$ if and only if there exists a linear monomial $x^{2^j}$ such that $\sum_{x\in\ff{2^n}}F(x)x^{2^j}\ne 0$.
\end{lemma}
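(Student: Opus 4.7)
The plan is to express the sum $S_j := \sum_{x\in\ff{2^n}} F(x)\,x^{2^j}$ directly in terms of the coefficients of the polynomial representation of $F$ and then match it against the set of monomials of 2-weight $n-1$. The key observation on the exponents is that an integer $i$ with $0\le i\le 2^n-1$ satisfies $w_2(i)=n-1$ if and only if exactly one bit in its $n$-bit binary expansion is zero; equivalently, $i = 2^n-1-2^j$ for some $j\in\{0,1,\dots,n-1\}$. So the degree $n-1$ monomials in any $F$ are precisely the $x^{2^n-1-2^j}$.

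First I would write $F(x)=\sum_{i=0}^{2^n-1} a_i x^i$ and expand
$$S_j \;=\; \sum_{i=0}^{2^n-1} a_i \sum_{x\in\ff{2^n}} x^{\,i+2^j}.$$
Then I would invoke the standard power-sum identity over $\ff{2^n}$:
$$\sum_{x\in\ff{2^n}} x^{k} = \begin{cases} 1 & \text{if } k>0 \text{ and } (2^n-1)\mid k,\\ 0 & \text{otherwise,}\end{cases}$$
which reduces the problem to identifying those $i\in\{0,\dots,2^n-1\}$ for which $i+2^j$ is a positive multiple of $2^n-1$.

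The central elementary step is a range argument: since $1 \le i+2^j \le (2^n-1)+2^{n-1}$, and $2(2^n-1) = 2^{n+1}-2 > (2^n-1)+2^{n-1}$ for $n\ge 2$, the only positive multiple of $2^n-1$ attainable is $2^n-1$ itself. This pins down a unique $i$, namely $i = 2^n-1-2^j$, whence $S_j = a_{2^n-1-2^j}$.

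From this identity the equivalence is immediate: $F$ has a term of degree $n-1$ iff $a_{2^n-1-2^j}\ne 0$ for some $j$, iff $S_j\ne 0$ for some $j$. I do not anticipate a real obstacle; the only point requiring care is the range bound that isolates $k=2^n-1$ as the unique contributing multiple of $2^n-1$, so that no degree $n$ term (coming from $i=2^n-1$) or lower-weight term can accidentally feed into $S_j$.
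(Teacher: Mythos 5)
Your proof is correct, and it arrives at the same underlying identity as the paper --- namely that $\sum_{x\in\ff{2^n}}F(x)x^{2^j}$ picks out the coefficient $a_{2^n-1-2^j}$ of the degree-$(n-1)$ monomial $x^{2^n-1-2^j}$ --- but by a noticeably more direct route. The paper first establishes the inequality chain $w_2(i)+1\ge w_2(i+2^j)\ge \deg(x^{i+2^j}\bmod x^{2^n}-x)$, uses it together with the fact that $\sum_{x}G(x)\ne 0$ iff $\deg(G)=n$ to kill all contributions from exponents of $2$-weight below $n-1$, and then explicitly reduces $x^{2^n-1-2^k+2^j}$ modulo $x^{2^n}-x$ in three cases ($j=k$, $j<k$, $j>k$) to evaluate the surviving terms for a particular choice $j=\bar k$. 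You replace all of that with the single power-sum orthogonality relation $\sum_{x\in\ff{2^n}}x^k=1$ iff $k>0$ and $(2^n-1)\mid k$, plus the range bound $1\le i+2^j\le(2^n-1)+2^{n-1}<2(2^n-1)$ for $n\ge 2$, which isolates $i=2^n-1-2^j$ as the unique contributing exponent; combined with the observation that the exponents of $2$-weight $n-1$ are exactly those of the form $2^n-1-2^j$, the equivalence follows in both directions at once. Your version is cleaner, handles the constant term, the $x^{2^n-1}$ term, and the low-weight terms uniformly without any modular reduction of exponents, and actually proves the slightly stronger statement $S_j=a_{2^n-1-2^j}$ for every $j$ rather than for one chosen index. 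The only caveat is the degenerate case $n=1$, where your strict inequality in the range bound fails, but the lemma is vacuous there anyway.
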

\begin{proof}
We note that for any $i\le 2^n-1$ and $j\le n-1$ we have  
$$
w_2(i)+1\ge w_2(i+2^j)\ge 
\deg(x^{i+2^j} \mod x^{2^n}-x).
$$

Indeed, let us consider the first inequality. If $2^j$ is not in the binary expansion of $i$, that is $i=\sum_{k=0}^{n-1}b_k2^k$ with $b_j=0$, then $w_2(i+2^j)=w_2(i)+1$. If $b_j=1$, then, denoting by $h=\min\{n, k\,:\, j+1\le k\le n-1, \, b_k=0\}$, we have $i+2^j=\sum_{k=0}^{j-1}b_k2^k+2^h+ \sum_{k=h+1}^{n-1}b_k2^k$, implying $w_2(i+2^j)=w_2(i)-(h-j)+1< w_2(i)+1$.

For the second inequality, we have that if $i+2^j\le 2^n-1$, then $w_2(i+2^j)=\deg(x^{i+2^j} \mod x^{2^n}-x)=\deg(x^{i+2^j})$. If $i+2^j> 2^n-1$, then it means that $i=\sum_{k=0}^{j-1}b_k2^k+\sum_{k=j}^{n-1}2^k$ and $i+2^j=2^n+\sum_{k=0}^{j-1}b_k2^k$. So, denoting by $h=\min\{j, k\,:\, 0\le k\le j-1, \, b_k=0\}$, we have $$x^{i+2^j} \mod x^{2^n}-x=x\cdot x^{\sum_{k=0}^{j-1}b_k2^k}= x^{2^h+\sum_{k=h+1}^{j-1}b_k2^k},$$
and so $w_2(i+2^j)\ge \deg(x^{2^h+\sum_{k=h+1}^{j-1}b_k2^k})$.

Thus, since for any function $G$ we have $\sum_{x\in\ff{2^n}}G(x)\ne 0$ if and only if $\deg(G)=n$, we obtain that if $w_2(i)<n-1$, then $\sum_{x\in\ff{2^n}}x^{i+2^j}=0$.

Now, consider the function $F$. We can write $F(x)=cx^{2^n-1}+\sum_{k=0}^{n-1}a_kx^{2^n-1-2^k}+G(x)$, where $\deg(G)\le n-2$. Note that for any $k=0,...,n-1$ and $0\le j\le n-1$, we have 
$$
x^{2^n-1-2^k+2^j} \mod x^{2^n}-x\equiv \begin{cases}
x^{2^n-1}& \text{ if } j=k\\
x^{(2^{n}-1)-(2^{k-1}+...+2^{j})}& \text{ if } j <k\\
x^{2^{j-1}+...+2^k}& \text{ if } j> k.
\end{cases}
$$

Let $\bar k=\min\{k\,:\, a_k\ne 0\}$, we have that 
$$\begin{aligned}
\sum_{x\in\ff{2^n}}F(x)x^{2^{\bar k}}=&c\sum_{x\in\ff{2^n}}x^{2^{\bar k}}+a_{\bar k}\sum_{x\in\ff{2^n}}x^{2^n-1}+\sum_{k=0}^{\bar k-1}\sum_{x\in\ff{2^n}}a_kx^{2^{\bar k-1}+...+2^k}\\
&+\sum_{k=\bar k+1}^{n-1}\sum_{x\in\ff{2^n}}a_kx^{2^n-1-(2^{k-1}+...+2^{\bar k})}+\sum_{x\in\ff{2^n}}G(x)x^{2^{\bar k}}=a_{\bar k}.
\end{aligned}
$$
Thus, if $F(x)$ has a term of algebraic degree $n-1$, that is, there exists $a_k\ne 0$, then we can find a monomial $x^{2^j}$ for which $\sum_{x\in\ff{2^n}}F(x)x^{2^{j}}\ne 0$. 

Vice versa, from above we have that if, by contradiction, $F(x)=cx^{2^n-1}+G(x)$, with $\deg(G)\le n-2$, then $\sum_{x\in\ff{2^n}}F(x)x^{2^{j}}= c\sum_{x\in\ff{2^n}}x^{2^{j}}+\sum_{x\in\ff{2^n}}G(x)x^{2^{j}}=0$ for any $j$.
\qed\end{proof}
\begin{remark}
If $F$ is a permutation, Lemma~\ref{lm:deg} gives a necessary and sufficient condition for having $\deg(F)=n-1$. Moreover, it is possible to generalize Lemma~\ref{lm:deg} also for the case  $\deg(F)=n-t$, using monomials $x^d$ with $w_2(d)=t$. This result is similar to that given in \cite{PHT17,TCT}, where the authors use an $n$-variables Boolean function to give a sufficient condition for having $\deg(F)=n-1$. In particular, they show that if there exists a Boolean function $h(x)$ of degree $n-k$ such that $\sum_{x\in\ff{2^n}}F(x)h(x)\ne 0$, then $\deg(F)\ge k$.
\end{remark}

\begin{corollary}\label{cor:1}
Let $n=sm$ with $s$ even such that $s/2$ and $m$ are odd. Let $k$ be such that $\gcd(k,n)=2$ and $f(x)=A_1\circ Inv\circ A_2(x)$, where $Inv(x)=x^{-1}$ and $A_1,A_2$ are affine permutations over $\ff{2^s}$. Then
$$F(x)=f(x)+(f(x)+x^{2^k+1})(x^{2^s}+x)^{2^n-1}=\begin{cases}
f(x)&\mbox{ if $x\in\ff{2^s}$}\\
x^{2^k+1}&\mbox{ if $x\notin\ff{2^s}$}
\end{cases}$$
is a differentially 6-uniform permutation over $\ff{2^n}$. Moreover, if $s>2$ then the algebraic degree of $F$ is $n-1$.
\end{corollary}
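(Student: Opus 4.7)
The first assertion, differential 6-uniformity of the permutation $F$, is essentially a direct consequence of Theorem~\ref{th:6diff1}. One only has to note that $f=A_1\circ\mathrm{Inv}\circ A_2$ is a permutation of $\ff{2^s}$ (as a composition of permutations) and, since differential uniformity is an affine invariant and the inverse map is differentially 4-uniform on $\ff{2^s}$ for $s$ even, $f$ is itself differentially 4-uniform on $\ff{2^s}$, hence in particular at most 6-uniform. Theorem~\ref{th:6diff1} then applies verbatim.

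For the degree statement, the plan is to exhibit a linear monomial $x^{2^j}$ such that $\sum_{x\in\ff{2^n}}F(x)\,x^{2^j}\ne 0$ and invoke Lemma~\ref{lm:deg}. Splitting the sum according to the piecewise definition of $F$ and using characteristic 2, I would write
$$
\sum_{x\in\ff{2^n}}F(x)\,x^{2^j}=\sum_{x\in\ff{2^s}}f(x)\,x^{2^j}+\sum_{x\in\ff{2^n}}x^{2^k+1+2^j}+\sum_{x\in\ff{2^s}}x^{2^k+1+2^j}.
$$
The strategy is to restrict the search to exponents $j\in\{0,1,\dots,s-1\}$ and show that, in this range, the last two sums vanish, leaving only the $f$-term, which can be made nonzero by Lemma~\ref{lm:deg} applied to $f$ on $\ff{2^s}$.

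The technical heart of the argument is the vanishing of those two monomial sums. For the sum over $\ff{2^n}$, I would observe that we may take $k\le n-2$ (since $\gcd(k,n)=2$ forces $k$ even) and $j\le s-1\le n/3-1$ (since $m\ge 3$ in the nontrivial case), so $2^k+1+2^j<2^n-1$ and the sum $\sum_{x\in\ff{2^n}}x^{2^k+1+2^j}$ is zero by the standard character-sum identity. For the sum over $\ff{2^s}$, since $2^k\equiv 2^{k\bmod s}\pmod{2^s-1}$, the exponent reduces modulo $2^s-1$ to $2^r+1+2^j$ with $r=k\bmod s\in\{0,\dots,s-1\}$; this lies in the range $[3,2^s+1]$, so divisibility by $2^s-1$ can only happen if $2^r+2^j=2^s-2$. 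But $2^s-2=2+4+\cdots+2^{s-1}$ has $s-1$ nonzero binary digits, and for $s>2$ (equivalently $s\ge 6$ under our hypotheses) it cannot be written as a sum of two powers of $2$; the sum over $\ff{2^s}$ therefore vanishes as well. I expect this combinatorial check on the binary expansion of $2^s-2$ to be the main obstacle, mostly in taking care that the boundary case $s=2$ is correctly isolated.

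Finally, since $\mathrm{Inv}$ has algebraic degree $s-1$ on $\ff{2^s}$ and $f=A_1\circ\mathrm{Inv}\circ A_2$ is affine equivalent to it, $\deg(f)=s-1$. By Lemma~\ref{lm:deg} applied to $f$ on $\ff{2^s}$, there exists $j_0\in\{0,\dots,s-1\}$ with $\sum_{x\in\ff{2^s}}f(x)\,x^{2^{j_0}}\ne 0$. Combined with the vanishing of the two monomial sums at this same $j_0$, we conclude $\sum_{x\in\ff{2^n}}F(x)\,x^{2^{j_0}}\ne 0$, and Lemma~\ref{lm:deg} then gives $\deg(F)=n-1$, as $F$ is a permutation so its degree is at most $n-1$ to begin with.
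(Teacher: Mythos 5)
Your proof is correct and follows essentially the same route as the paper: the uniformity claim is delegated to Theorem~\ref{th:6diff1} (after noting $f$ is $4$-uniform, being affine equivalent to the inverse), and the degree claim is obtained from Lemma~\ref{lm:deg} via the identical three-term decomposition of $\sum_{x\in\ff{2^n}}F(x)x^{2^j}$. The only cosmetic difference is that you verify the vanishing of the two monomial sums by explicit arithmetic on the exponent $2^k+1+2^j$ modulo $2^n-1$ and $2^s-1$, whereas the paper simply observes that $x^{2^k+1}x^{2^j}$ has algebraic degree at most $3$, which is less than $s$ and $n$ when $s>2$; both arguments isolate $s=2$ as the exceptional case in the same way.
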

\begin{proof}
We need to prove only that the degree of $F$ is $n-1$. From Lemma~\ref{lm:deg}, 
since $f(x)$ is affine-equivalent to the inverse function over $\ff{2^s}$, there exists a monomial $h(x)=x^{2^j}$ in $\ffx{2^s}$ (with $j\le s-1$) such that 
$$
\sum_{x\in\ff{2^s}}f(x)h(x)\ne 0.
$$

Thus, since $x^{2^k+1}$ has algebraic degree equal to $2<s-1$ over $\ff{2^n}$ and also over $\ff{2^s}$, we obtain
$$
\begin{aligned}
\sum_{x\in\ff{2^n}}F(x)h(x)&=\sum_{x\in\ff{2^s}}f(x)h(x)+\sum_{x\in\ff{2^n}}x^{2^k+1}h(x)+\sum_{x\in\ff{2^s}}x^{2^k+1}h(x)\\&=\sum_{x\in\ff{2^s}}f(x)h(x)\ne 0.
\end{aligned}$$
Then, we have a term of degree $n-1$ in the polynomial representation of $F(x)$, implying $\deg(F)= n-1$ since $F$ is a permutation.
\qed\end{proof}
\begin{remark}
When $s=2$, we have that $x^{2^k+1}$ has algebraic degree $s-1=1$ over $\ff{2^s}$, and we could obtain a permutation $F(x)=f(x)+(f(x)+x^{2^{2k}+2^k+1})(x^{2^s}+x)^{2^n-1}$ of degree less than $n-1$ (an example is provided for $n=6$ and $n=10$ in Table~\ref{tab:6} and Table~\ref{tab:10}).
\end{remark}

Similarly we have the following construction using the Bracken-Leander function.

\begin{corollary}\label{cor:2}
Let $n=4k=sm$ with $k$, $m$ odd and $s$ even. Let $f(x)=A_1\circ Inv\circ A_2(x)$, where $Inv(x)=x^{-1}$ and $A_1,A_2$ are affine permutations over $\ff{2^s}$. Then
$$F(x)=f(x)+(f(x)+x^{2^{2k}+2^k+1})(x^{2^s}+x)^{2^n-1}=\begin{cases}
f(x)&\mbox{ if $x\in\ff{2^s}$}\\
x^{2^{2k}+2^k+1}&\mbox{ if $x\notin\ff{2^s}$}
\end{cases}$$
is a differentially 6-uniform permutation over $\ff{2^n}$. Moreover, if $s>4$ then  $\deg(F)=n-1$.
\end{corollary}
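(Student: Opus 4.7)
The plan is to mirror the proof of Corollary~\ref{cor:1}, adapting only the 2-weight bookkeeping to accommodate the extra monomial in the Bracken--Leander exponent. For the differential part everything is immediate: since $s$ is even, $Inv$ is differentially $4$-uniform on $\ff{2^s}$, and affine equivalence preserves the differential uniformity, so $f$ is a differentially at-most-$6$-uniform permutation of $\ff{2^s}$. Theorem~\ref{th:6diff2} then directly yields that $F$ is a differentially $6$-uniform permutation of $\ff{2^n}$.

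For the algebraic degree, I would apply Lemma~\ref{lm:deg} to $f$ on $\ff{2^s}$: since $f$ inherits $\deg f=s-1$ from $Inv$, there is a $j\in\{0,\dots,s-1\}$ with $\sum_{x\in\ff{2^s}} f(x)x^{2^j}\ne 0$. Setting $h(x)=x^{2^j}$ and using the characteristic-$2$ identity $\sum_{x\in\ff{2^n}\setminus\ff{2^s}}=\sum_{x\in\ff{2^n}}+\sum_{x\in\ff{2^s}}$, I would decompose
$$
\sum_{x\in\ff{2^n}} F(x)h(x)=\sum_{x\in\ff{2^s}} f(x)h(x)+\sum_{x\in\ff{2^n}} x^{2^{2k}+2^k+1}h(x)+\sum_{x\in\ff{2^s}} x^{2^{2k}+2^k+1}h(x),
$$
and then argue that the two Bracken--Leander power sums vanish, so that the total is nonzero; Lemma~\ref{lm:deg} together with the fact that $F$ is a permutation would then force $\deg F=n-1$.

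The crucial estimate, and the point at which the hypothesis $s>4$ enters, concerns the 2-weight of the exponent $D=2^{2k}+2^k+1+2^j$. For $r\in\{s,n\}$, the sum $\sum_{x\in\ff{2^r}} x^D$ is zero unless $(2^r-1)\mid D$; on $\ff{2^r}$ each factor $x^{2^a}$ acts as $x^{2^{a\bmod r}}$, so the effective exponent is a sum of at most four powers of $2$ each smaller than $2^r$, that is, a positive integer in $[1,2^{r+1}]$ of 2-weight at most $4$, since any coincidence among $\{2k\bmod r,\,k\bmod r,\,0,\,j\}$ produces carries that can only lower the weight. The only positive multiples of $2^r-1$ in that interval are $2^r-1$ and $2(2^r-1)$, both of 2-weight $r\ge s>4$, so the exponent cannot be a multiple of $2^r-1$ and the power sum vanishes. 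The main obstacle to streamline is precisely this weight argument on $\ff{2^s}$: overlaps among the four powers could in principle cause carries that need to be tracked; the hypothesis $s>4$ is exactly what ensures the weight stays strictly below $s$ regardless of how these coincidences occur, and it is also what excludes the small cases (e.g.\ $n=4$ with $s=4$) where the Bracken--Leander monomial has degree matching $s-1$.
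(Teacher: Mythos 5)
Your proposal is correct and follows essentially the same route as the paper, which proves this corollary by mirroring Corollary~\ref{cor:1}: the differential part comes straight from Theorem~\ref{th:6diff2}, and the degree part from Lemma~\ref{lm:deg} together with the vanishing of the two power sums $\sum x^{2^{2k}+2^k+1+2^j}$ over $\ff{2^n}$ and $\ff{2^s}$. Your explicit divisibility-by-$2^r-1$ bookkeeping is just a more detailed rendering of the paper's observation that the product monomial has algebraic degree at most $4<s$ over $\ff{2^s}$ (and $<n$ over $\ff{2^n}$), which is exactly where the hypothesis $s>4$ is used.
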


\begin{remark}
For $s=4$, we have that $x^{2^{2k}+2^k+1}$ has algebraic degree $s-1=3$ over $\ff{2^s}$, and we could obtain a permutation $F(x)=f(x)+(f(x)+x^{2^{2k}+2^k+1})(x^{2^s}+x)^{2^n-1}$ of degree less than $n-1$ (we will provide an example for $n=12$ in Table~\ref{tab:12}).
\end{remark}

\begin{proposition}\label{prop:walsh}
The nonlinearity of the functions in Corollary~\ref{cor:1} and Corollary~\ref{cor:2} is at least $2^{n-1}-2^{\frac{n}{2}}-2^{\frac{s}{2}+1}$.
\end{proposition}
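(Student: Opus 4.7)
The plan is to decompose the Walsh transform of $F$ into a global part coming from $g$ and a correction supported on the subfield:
\[
\cW_F(a,b)=\cW_g(a,b)+S_f(a,b)-S_g(a,b),\qquad S_h(a,b):=\sum_{x\in\ff{2^s}}(-1)^{\Tr(ax+bh(x))}
\]
for $h\in\{f,g\}$. The key reduction is that since $f$ and the restriction $g|_{\ff{2^s}}$ both take values in $\ff{2^s}$, transitivity of the trace gives
\[
S_h(a,b)=\sum_{x\in\ff{2^s}}(-1)^{\Tr_1^s(\Tr_s^n(a)\,x+\Tr_s^n(b)\,h(x))}=\cW_h^{(s)}\bigl(\Tr_s^n(a),\Tr_s^n(b)\bigr),
\]
where $\cW_h^{(s)}$ denotes the Walsh transform of $h$ viewed as a map $\ff{2^s}\to\ff{2^s}$.

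I would then split on whether $\Tr_s^n(b)$ vanishes. If $\Tr_s^n(b)=0$, the $h(x)$-term inside $S_h$ disappears and both partial sums reduce to the same character sum $\sum_{x\in\ff{2^s}}(-1)^{\Tr_1^s(\Tr_s^n(a)x)}$; the correction cancels and $|\cW_F(a,b)|=|\cW_g(a,b)|\leq 2^{n/2+1}$, using the known nonlinearity $2^{n-1}-2^{n/2}$ of the Gold (resp.\ Bracken-Leander) function on $\ff{2^n}$. If $\Tr_s^n(b)\neq 0$, I would bound $|S_f|$ and $|S_g|$ independently. Since $f$ is affine equivalent to the inverse map on $\ff{2^s}$ and affine equivalence preserves the Walsh spectrum, the Lachaud-Wolfmann bound (valid because $s$ is even) gives $|S_f(a,b)|\leq 2^{s/2+1}$. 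For $|S_g|$ I would identify $g|_{\ff{2^s}}$ as a low-Walsh function on the subfield: in the setting of Corollary~\ref{cor:1}, the identity $x^{2^s}=x$ on $\ff{2^s}$ yields $g|_{\ff{2^s}}(x)=x^{2^{k\bmod s}+1}$, and $\gcd(k\bmod s,s)$ divides $\gcd(k,n)=2$, so the Gold Walsh formula (with $s/2$ odd by hypothesis) gives $|S_g(a,b)|\leq 2^{(s+2)/2}=2^{s/2+1}$; in the setting of Corollary~\ref{cor:2}, writing $n=4k=sm$ with $k,m$ odd forces $s=4k'$ with $k'=k/m$ odd, and reducing $2^{2k}+2^k+1$ modulo $2^s-1$ (according to $m\bmod 4\in\{1,3\}$) shows that $g|_{\ff{2^s}}$ equals the Bracken-Leander map $x^{2^{2k'}+2^{k'}+1}$ up to composition with a Frobenius automorphism of $\ff{2^s}$; the theorem of \cite{BL10} then yields $|S_g(a,b)|\leq 2^{s/2+1}$.

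Combining the two cases gives the uniform bound $|\cW_F(a,b)|\leq 2^{n/2+1}+2^{s/2+2}$ and hence $\NL(F)\geq 2^{n-1}-2^{n/2}-2^{s/2+1}$, as claimed. The most delicate step is the subfield identification in the Bracken-Leander case: one has to reduce the exponent modulo $2^s-1$ carefully, and exploit that multiplication of the exponent by $2^{2k'}$ (which arises in the $m\equiv 3\pmod 4$ case) corresponds to composition with a Frobenius, and therefore preserves the Walsh spectrum.
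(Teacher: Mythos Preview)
Your approach is essentially identical to the paper's: the same additive decomposition $\cW_F(a,b)=\cW_g(a,b)+S_f(a,b)-S_g(a,b)$, followed by trace transitivity to rewrite $S_h$ as a Walsh coefficient of $h$ over $\ff{2^s}$, and then the known $2^{t/2+1}$ Walsh bounds for the inverse, Gold, and Bracken--Leander maps. Your version is in fact more careful than the paper's: the paper applies the triangle inequality directly and cites the $2^{t/2+1}$ bound without your case split on $\Tr_s^n(b)=0$ (where each subfield sum can equal $2^s$, so the triangle inequality alone would overshoot) and without verifying that $g|_{\ff{2^s}}$ is again a Gold/Bracken--Leander map on $\ff{2^s}$; your cancellation argument in the degenerate case and your explicit subfield identification (including the Frobenius twist when $m\equiv 3\pmod 4$) close both of these gaps.
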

\begin{proof}
Consider any function 
$$
F(x)=\begin{cases}
f(x)&\mbox{ if $x\in\ff{2^s}$}\\
g(x)&\mbox{ if $x\notin\ff{2^s}$}
\end{cases}$$
where $f$ and $g$ have coefficients in $\ff{q^s}$.

Then the Walsh coefficient $\cW_F(a,b)$ satisfies the following
$$
\begin{aligned}
\cW_F(a,b)=&\sum_{x\in\ff{2^s}}(-1)^{Tr(ax+bf(x))}+\sum_{x\in\ff{2^n}\setminus\ff{2^s}}(-1)^{Tr(ax+bg(x))}\\
		=&\sum_{x\in\ff{2^s}}(-1)^{Tr(ax+bf(x))}+\sum_{x\in\ff{2^n}}(-1)^{Tr(ax+bg(x))}-\sum_{x\in\ff{2^s}}(-1)^{Tr(ax+bg(x))}\\
		=&\sum_{x\in\ff{2^s}}(-1)^{Tr^s_1(Tr^n_s(a)x+Tr^n_s(b)f(x))}+\sum_{x\in\ff{2^n}}(-1)^{Tr(ax+bg(x))}\\
		&-\sum_{x\in\ff{2^s}}(-1)^{Tr^s_1(Tr^n_s(a)x+Tr^n_s(b)g(x))}.
\end{aligned}
$$
Then, 
$$\begin{aligned}
|\cW_F(a,b)|\le&|\sum_{x\in\ff{2^s}}(-1)^{Tr^s_1(Tr^n_s(a)x+Tr^n_s(b)f(x))}|+|\sum_{x\in\ff{2^n}}(-1)^{Tr(ax+bg(x))}|\\
		&+|\sum_{x\in\ff{2^s}}(-1)^{Tr^s_1(Tr^n_s(a)x+Tr^n_s(b)g(x))}|\\
		=& |\cW^{(s)}_f(Tr^n_s(a),Tr^n_s(b))|+|\cW^{(s)}_g(Tr^n_s(a),Tr^n_s(b))|+|\cW_g(a,b)|.
\end{aligned}$$
where $\cW^{(s)}$ denotes the Walsh transform computed over $\ff{2^s}$.
For the inverse function, Gold function and Bracken-Leander function the maximal values of the module of a Walsh coefficient is $2^{\frac{t}{2}+1}$ \cite{BL10,goldwalsh}, where $t$ is the degree of the field extension where we are computing  the Walsh coefficient. 

Thus, for $F$ as in Corollary~\ref{cor:1} or Corollary~\ref{cor:2} we have
$$
|\cW_F(a,b)|\le 2^{\frac{s}{2}+2}+2^{\frac{n}{2}+1},
$$
implying
$$
\mathcal{N}\hspace{-0.75 mm}\ell(F)\ge 2^{n-1}-2^{\frac{s}{2}+1}-2^{\frac{n}{2}}.
$$
\qed\end{proof}

\begin{remark}
We can note that, whenever we modify the Gold or Bracken-Leander function with a bijective vectorial Boolean function $f(x)\in\ffx{2^s}$, which is at most 6-uniform of algebraic degree $s-1$, we can obtain a 6-uniform permutation of maximal algebraic degree. Moreover, the nonlinearity of the obtained function is greater or equal to $2^{n-1}-\frac{\max_{a\in\ffs{2^s},b\in\ff{2^s}}|\cW^{(s)}_f(a,b)|+2^{\frac{s}{2}+1}}{2}-2^{\frac{n}{2}}$.
\end{remark}

It is well known that the algebraic degree is not preserved by the CCZ-equivalence and in particular by the inverse transformation. However, for any permutation of maximal algebraic degree, from Lemma~\ref{lm:deg}, we have the following easy observation.

\begin{proposition}\label{prop:deginv}
Let $F$ be a permutation defined over $\ff{2^n}$. Then, $\deg(F)=n-1$ if and only if $\deg(F^{-1})=n-1$.
\end{proposition}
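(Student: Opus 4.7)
The plan is to translate both conditions ``$\deg(F) = n-1$'' and ``$\deg(F^{-1}) = n-1$'' into the sum criterion of Lemma~\ref{lm:deg}, and then to show that these two resulting conditions are equivalent via the substitution $y = F(x)$ combined with a Frobenius twist. Since $F$ (and hence $F^{-1}$) is a permutation, we already have $\deg(F) \le n-1$ and $\deg(F^{-1}) \le n-1$, so ``degree exactly $n-1$'' is the same as ``has a term of degree $n-1$'' for both.

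First I apply Lemma~\ref{lm:deg} to $F$: the equality $\deg(F) = n-1$ holds iff there exists $j \in \{0, 1, \dots, n-1\}$ with $S_j(F) := \sum_{x \in \ff{2^n}} F(x)\, x^{2^j} \neq 0$. Because $F$ is a bijection on $\ff{2^n}$, the substitution $y = F(x)$ yields
$$S_j(F) = \sum_{y \in \ff{2^n}} y \cdot F^{-1}(y)^{2^j}.$$
Raising this element to the $(2^{n-j})$-th power and using the additivity of Frobenius together with $a^{2^n} = a$ for all $a \in \ff{2^n}$, I obtain
$$S_j(F)^{2^{n-j}} \;=\; \sum_{y \in \ff{2^n}} y^{2^{n-j}}\, F^{-1}(y)^{2^n} \;=\; \sum_{y \in \ff{2^n}} F^{-1}(y)\, y^{2^{n-j}} \;=\; S_{n-j}(F^{-1}),$$
where the index $n-j$ is read modulo $n$, so $j=0$ simply matches the index $0$ (using $y^{2^n} = y$).

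Since raising to a power of $2$ is a bijection on $\ff{2^n}$ (in particular $\xi^{2^{n-j}} = 0 \iff \xi = 0$), the previous identity gives $S_j(F) \neq 0$ iff $S_{n-j}(F^{-1}) \neq 0$. As $j \mapsto n-j \pmod n$ is a bijection of $\{0, 1, \dots, n-1\}$ onto itself, the existence of some $j$ with $S_j(F) \neq 0$ is equivalent to the existence of some $j'$ with $S_{j'}(F^{-1}) \neq 0$, which by Lemma~\ref{lm:deg} applied to $F^{-1}$ is exactly $\deg(F^{-1}) = n-1$. The whole argument boils down to a change of variable plus a Frobenius twist, so there is no substantive obstacle; the only mild point of care is the reindexing $j \mapsto n - j \pmod n$ and the observation that the Frobenius preserves nonvanishing.
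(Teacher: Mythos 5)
Your proof is correct and follows essentially the same route as the paper: apply Lemma~\ref{lm:deg} and perform the change of variable $y=F(x)$ in the sum $\sum_{x\in\ff{2^n}}F(x)x^{2^j}$. The only cosmetic difference is that you move the Frobenius twist onto the monomial by raising the whole sum to the power $2^{n-j}$, whereas the paper reads the substituted sum as the Lemma~\ref{lm:deg} criterion for $h\circ F^{-1}$ and then uses that composition with a linear monomial preserves the degree; both steps are valid.
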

\begin{proof}
 Suppose $\deg(F)=n-1$. From Lemma~\ref{lm:deg} there exists a linear monomial $h(x)$ for which we have $\sum_{x\in\ff{2^n}}F(x)h(x)\ne 0$. Since $F$ is a permutation we have
$$
\sum_{x\in\ff{2^n}}F(x)h(x)=\sum_{x\in\ff{2^n}}xh(F^{-1}(x)),
$$
which implies $\deg(h\circ F^{-1})=n-1$. Since $h$ is a linear monomial we have that $\deg(F^{-1})=n-1$.
\qed\end{proof}

From this result we have that also the compositional inverses of the functions given in Corollary~\ref{cor:1} and Corollary~\ref{cor:2} are differentially 6-uniform functions with high nonlinearity and algebraic degree $n-1$. \\

In Table \ref{tab:6} and Table  \ref{tab:10}, we give the CCZ-inequivalent functions that can be obtained by Corollary~\ref{cor:1} for $n=6,10$. Let $\ffs{2^n}=\langle \gz\rangle$, we denote by $\go=\gz^{\frac{2^n-1}{3}}$ the primitive element of $\ff{4}$. Since over $\ff{4}$ all the permutations are linear (affine) we consider $f(x)=A\circ Inv$. In both cases, we have 5 CCZ-inequivalent functions. One is the Gold function $x^5$, obtained for $A(x)=x$ and it is differentially 4-uniform, three of these were obtained also in \cite{6diffe} and one is new. Note that the new function has degree $n-2$, indeed as pointed out in the proof of Corollary \ref{cor:1} when $s=2$ we could obtain functions with degree lower that $n-1$.

\begin{table}[h!]\caption{CCZ-inequivalent permutations from Corollary ~\ref{cor:1} over $\ff{2^6}$}\label{tab:6}
\centering\begin{tabular}{|c|c|c|c|c|}
\hline
$A(x)$ & deg & $\mathcal{N}\hspace{-0.75 mm}\ell(G)$ & Bound on $\mathcal{N}\hspace{-0.75 mm}\ell$&$\gd$\\
\hline
$x$ & 2 &24& 20& 4\\
\hline
$x+\go $ & 4 &20& 20&6\\
\hline
$\go x^2+\go $ & 5 &20& 20&6\\
\hline
$\go x$ & 5 &22& 20&6\\
\hline
$\go^2x^2+\go $ & 5 &22& 20&6\\
\hline
\end{tabular}
\end{table}

\begin{table}[h!]\caption{CCZ-inequivalent permutations from Corollary ~\ref{cor:1} over $\ff{2^{10}}$}\label{tab:10}
\centering\begin{tabular}{|c|c|c|c|c|}
\hline
$A(x)$ & deg & $\mathcal{N}\hspace{-0.75 mm}\ell(G)$ & Bound on $\mathcal{N}\hspace{-0.75 mm}\ell$&$\gd$\\
\hline
 $x$ & 2 &480& 476&4\\
\hline
$x+\go $ & 8 &476& 476&6\\
\hline
$\go x^2 + \go$ & 9 &476& 476&6\\
\hline
$\go x$ & 9 &478& 476&6\\
\hline
$\go^2 x^2+\go$ & 9 &478& 476&6\\
\hline
\end{tabular}
\end{table}

In Table \ref{tab:12} we report some permutations constructed from Corollary~\ref{cor:2} for $n=12$ (in this case $s=4$ and $m=3$). As before, we denote by $\go=\gz^{\frac{2^n-1}{3}}$ the primitive element of $\ff{4}\subset \ff{2^s}$ and we consider $f(x)=A\circ Inv$ with $A$ affine permutations defined over $\ff{4}[x]$. Note that for $A(x)=x^2$ we obtain the Bracken-Leander permutation.

\begin{table}[h!]\caption{CCZ-inequivalent permutations from Corollary ~\ref{cor:2} over $\ff{2^{12}}$}\label{tab:12}
\centering\begin{tabular}{|c|c|c|c|c|}
\hline
$A(x)$ & deg & $\mathcal{N}\hspace{-0.75 mm}\ell(G)$ & Bound on $\mathcal{N}\hspace{-0.75 mm}\ell$&$\gd$\\
\hline
$x^2$ & 3 &1984& 1976&4\\
\hline
$x^2+1 $ & 8 &1976& 1976&6\\
\hline
$\go^2 x^2 + \go$ &11 &1976& 1976&6\\
\hline
$ x+\go$ & 11 &1978& 1976&6\\
\hline
$\go x^2$ & 11 &1980& 1976&6\\
\hline
\end{tabular}
\end{table}

\section{Remarks on the boomerang uniformity for some piecewise permutations}

Recently, the boomerang uniformity of some permutations has been studied in \cite{BOCA,calvil,Li19,mesn,chun,nian,lipow}. However, the analysis has been focused, principally, on the case of quadratic or power functions. Only in \cite{Li19} and in \cite{calvil}, it has been determined the boomerang uniformity of some piecewise functions obtained from the inverse function modified on the subfield $\ff{4}$.

In this section, we give some results about the boomerang uniformity of piecewise permutations as in Theorem \ref{th:main}.
%

It was noted in \cite{Li19}, that the entry $T_F(a,b)$ of the BCT  of a function $F$ can be given by the number of solutions $(x,y)$ of the system 
$$
\begin{cases}
F^{-1}(x+a)+F^{-1}(y+a)=b\\
F^{-1}(x)+F^{-1}(y)=b.
\end{cases}
$$

Moreover, for the functions $F$ and $F^{-1}$ we have that $T_F(a,b)=T_{F^{-1}}(b,a)$. So, the boomerang uniformity of $F$ is given by the maximum number of solutions of the system
\begin{equation}
\begin{cases}\label{1}
F(x+a)+F(y+a)=b\\
F(x)+F(y)=b.
\end{cases}
\end{equation}

Denoting $y=x+\alpha$ and $D_aF(x)=F(x)+F(x+a)$ , we obtain
\begin{equation}\label{2}
\begin{cases}
D_\alpha F(x+a)=b\\
D_\alpha F(x)=b.
\end{cases}
\end{equation}
Therefore, let 
$$\begin{aligned}\gb_F(a,b)=&|\{(x,y)\in\fn^2\,:\, (x,y) \text{ is a solution of \eqref{1}}\}|\\
=&|\{(x,\ga)\in\fn^2\,:\, (x,\ga) \text{ is a solution of \eqref{2}}\}|,\end{aligned}$$ the boomerang uniformity of $F$ is given by 
$$\gb_F = \max_{a,b\in\ffs{2^n}}\gb_F(a,b).$$

For piecewise permutations as in Theorem~\ref{th:main} we obtain the following result.
\begin{proposition}\label{prop:boom}
Let $n=sm$. Let $$F(x)=f(x)+(f(x)+g(x))(x^{2^s}+x)^{2^n-1}=\begin{cases}
f(x)&\mbox{ if $x\in\ff{2^s}$}\\
g(x)&\mbox{ if $x\notin\ff{2^s}$}
\end{cases}$$
be a permutation over $\ff{2^n}$ where $f,g$ are as in Theorem \ref{th:main}, with $g$ satisfying (H3). Then, for any $a,b\in\ffs{2^n}$ 
$$
\gb_F(a,b)\le\begin{cases}
\gb_f(a,b)+\gb_g(a,b)& \text{if $a,b\in\ffs{2^s}$}\\
\gb_g(a,b)& \text{if $a\in\ffs{2^s}$ and $b\notin\ff{2^s}$}\\
\gb_g(a,b)+4\cdot N+2& \text{if $a,b\notin \ff{2^s}$},
\end{cases}
$$
where $N=|\{\ga\notin\ff{2^s}\mid \, f(x)+g(x+\ga)=b \text{ has solutions $x\in \ff{2^s}$}\}|$.
\end{proposition}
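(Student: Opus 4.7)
The plan is to use the equivalent counting formulation
$$\beta_F(a,b) = |\{(x,\alpha): D_\alpha F(x) = D_\alpha F(x+a) = b\}|$$
and decompose solutions $(x,\alpha)$ according to which of $x$, $x+a$, $x+\alpha$, $x+\alpha+a$ lie in $\ff{2^s}$. A preparatory observation, which I call the \emph{uniqueness lemma}, will be used repeatedly: for any $\alpha \notin \ff{2^s}$, the cross equation $f(x)+g(x+\alpha)=b$ has at most one solution $x \in \ff{2^s}$. Indeed, two distinct solutions $x_1,x_2 \in \ff{2^s}$ would give $f(x_1)+f(x_2) = g(x_1+\alpha)+g(x_2+\alpha)$, whose left side lies in $\ff{2^s}$ (as $f \in \ffx{2^s}$ and $x_i \in \ff{2^s}$), while the right side is $D_{x_1+x_2} g(x_1+\alpha)$ with $x_1+x_2 \in \ffs{2^s}$ and $x_1+\alpha \notin \ff{2^s}$, contradicting (H3). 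The same device (summing the two derivative equations and invoking (H3)) will eliminate the mixed configurations that appear in the cases $a \in \ffs{2^s}$.

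In the two cases with $a \in \ffs{2^s}$, the shift by $a$ preserves cosets modulo $\ff{2^s}$, so $x$ and $x+a$ always share a coset, as do $x+\alpha$ and $x+\alpha+a$. If $\alpha \in \ff{2^s}$, the system collapses either to a pure $f$-boomerang (contributing at most $\beta_f(a,b)$) or to a pure $g$-boomerang (contributing at most $\beta_g(a,b)$). If $\alpha \notin \ff{2^s}$, subtracting the two derivative equations yields $D_a g(z)$ equal to an element of $\ff{2^s}$ for some $z \notin \ff{2^s}$, which (H3) rules out. This settles the first subcase with $b \in \ffs{2^s}$. When $b \notin \ff{2^s}$, the pure-$f$ branch is additionally impossible because its left-hand side lies in $\ff{2^s}$, leaving only the pure-$g$ contribution bounded by $\beta_g(a,b)$.

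The main case is $a,b \notin \ff{2^s}$. Setting $y = x+\alpha$, I classify pairs by $(\bar x, \bar y) \in \{\bar 0, \bar a, \text{other}\}^2$, where "other" denotes a coset different from $\bar 0$ and $\bar a$, giving nine subcases. The two diagonals $(\bar 0,\bar 0)$ and $(\bar a,\bar a)$ force one of the two derivative equations to have its left-hand side in $\ff{2^s}$, contradicting $b \notin \ff{2^s}$. The subcase $(\text{other},\text{other})$ yields a pure $g$-boomerang system, so its count is at most $\beta_g(a,b)$. Each of the six remaining subcases produces at least one cross equation $f(z)+g(z+\alpha)=b$ with $\alpha \notin \ff{2^s}$, making the uniqueness lemma applicable.

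For $(\bar 0,\bar a)$ and $(\bar a,\bar 0)$, both derivative equations turn out to be cross equations in the same $\alpha \in \bar a$; requiring two elements of $A_\alpha := \{z \in \ff{2^s} : f(z)+g(z+\alpha)=b\}$ whose difference is $\alpha+a \in \ff{2^s}$ forces, via $|A_\alpha| \le 1$, the identity $\alpha = a$, so each of these two subcases contributes at most one pair, accounting for the additive $+2$. In each of the subcases $(\bar 0,\text{other})$, $(\text{other},\bar 0)$, $(\bar a,\text{other})$, $(\text{other},\bar a)$, exactly one of the two derivative equations becomes a cross equation in an "other" $\alpha$ while the other acts as an additional constraint; the uniqueness lemma bounds the count per admissible $\alpha$ by one, and summing over admissible $\alpha$'s (a subset of the set counted by $N$) gives at most $N$ per subcase, hence $4N$ in total. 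Combining all contributions yields $\beta_F(a,b) \le \beta_g(a,b) + 4N + 2$. The main obstacle is performing the nine-subcase analysis carefully: in particular, verifying that each cross subcase reduces to a single application of the uniqueness lemma and tracking the degenerate $\alpha = a$ branch that produces the additive $+2$.
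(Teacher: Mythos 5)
Your proof follows essentially the same route as the paper's: your nine coset subcases for $a,b\notin\ff{2^s}$ are precisely the paper's systems (S2) and (S5)--(S12), your ``uniqueness lemma'' is the paper's appeal to the argument in the proof of Theorem~\ref{th:main}, the degenerate branch $\ga=a$ yields the paper's $+2$ from (S5)--(S6), and the four one-sided cross systems yield the $4N$. The only inaccuracy is in the cases $a\in\ffs{2^s}$: when $\ga\notin\ff{2^s}$ it can happen that $x$ and $x+\ga$ both lie outside $\ff{2^s}$, in two distinct nonzero cosets; then both derivative equations involve only $g$, adding them gives $D_ag(x)=D_ag(x+\ga)$ with no forced membership of either side in $\ff{2^s}$, so (H3) does not rule these solutions out as you claim. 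They are, however, solutions of the pure $g$-boomerang system and hence already absorbed into the $\gb_g(a,b)$ term, so the stated bounds are unaffected; only the genuinely mixed configurations (one of the pairs $\{x,x+a\}$, $\{x+\ga,x+\ga+a\}$ inside $\ff{2^s}$ and the other outside) are eliminated by (H3).
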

\begin{proof}
Note that System \eqref{1} could be divided into sixteen systems depending on $x$, $x+a$, $y$ and $y+a$ lying in $\ff{2^s}$ or not. However, for four of these systems we can immediately see that they are not possible, such as
$$
\begin{cases}
f(x)+f(y)=b\\
f(x+a)+g(y+a)=b,
\end{cases}
$$
since it would mean that $x,x+a$ and $y$ are in $\ff{2^s}$ and $y+a$ is not. But then, also $a\in\ff{2^s}$ and so $y+a\in\ff{2^s}$, which is a contradiction. 

Therefore, the possible systems that we could obtain are twelve:
$$
(S1)\begin{cases}
f(x)+f(y)=b\\
f(x+a)+f(y+a)=b\\
x,y,a\in\ff{2^s},\\
\end{cases}\quad
(S2)\begin{cases}
g(x)+g(y)=b\\
g(x+a)+g(y+a)=b\\
x,y,x+a,y+a\notin\ff{2^s},
\end{cases}
$$
$$
(S3)\begin{cases}
f(x)+g(y)=b\\
f(x+a)+g(y+a)=b\\
x,a\in\ff{2^s},y\notin\ff{2^s},
\end{cases}\quad
(S4)\begin{cases}
g(x)+f(y)=b\\
g(x+a)+f(y+a)=b\\
y,a\in\ff{2^s},x\notin\ff{2^s},
\end{cases}
$$
$$
(S5)\begin{cases}
f(x)+g(y)=b\\
g(x+a)+f(y+a)=b\\
x,y+a\in\ff{2^s},a\notin\ff{2^s},
\end{cases}\quad
(S6)\begin{cases}
g(x)+f(y)=b\\
f(x+a)+g(y+a)=b\\
x+a,y\in\ff{2^s},a\notin\ff{2^s},
\end{cases}
$$
$$
(S7)\begin{cases}
f(x)+g(y)=b\\
g(x+a)+g(y+a)=b\\
x\in\ff{2^s},y,y+a,a\notin\ff{2^s},
\end{cases}\quad
(S8)\begin{cases}
g(x)+f(y)=b\\
g(x+a)+g(y+a)=b\\
y\in\ff{2^s},x,x+a,a\notin\ff{2^s},
\end{cases}
$$
$$
(S9)\begin{cases}
g(x)+g(y)=b\\
f(x+a)+g(y+a)=b\\
x+a\in\ff{2^s},x,y,y+a\notin\ff{2^s},
\end{cases}\quad
(S10)\begin{cases}
g(x)+g(y)=b\\
g(x+a)+f(y+a)=b\\
y+a\in\ff{2^s},x,x+a,y\notin\ff{2^s},
\end{cases}
$$
$$
(S11)\begin{cases}
f(x)+f(y)=b\\
g(x+a)+g(y+a)=b\\
x,y\in\ff{2^s},a\notin\ff{2^s},
\end{cases}\quad
(S12)\begin{cases}
g(x)+g(y)=b\\
f(x+a)+f(y+a)=b\\
x+a,y+a\in\ff{2^s},a\notin\ff{2^s}.
\end{cases}
$$
First of all, note that:
\begin{itemize}
\item $(x,y)\in\ff{2^s}\times (\ff{2^n}\setminus\ff{2^s})$ is a solution of (S3) if and only if $(y,x)$ is a solution of (S4);
\item $(x,y)\in\ff{2^s}\times (\ff{2^n}\setminus\ff{2^s})$ is a solution of (S5) if and only if $(y,x)$ is a solution of (S6);
\item $(x,y)\in\ff{2^s}\times (\ff{2^n}\setminus\ff{2^s})$ is a solution of (S7) if and only if $(y,x)$ is a solution of (S8);
\item $(x,y)\in\ff{2^s}\times (\ff{2^n}\setminus\ff{2^s})$ is a solution of (S7) if and only if $(x+a,y+a)$ is a solution of (S9);
\item $(x,y)\in\ff{2^s}\times (\ff{2^n}\setminus\ff{2^s})$ is a solution of (S7) if and only if $(y+a,x+a)$ is a solution of (S10);
\item $(x,y)\in\ff{2^s}\times \ff{2^s}$ is a solution of (S11) if and only if $(x+a,y+a)$ is a solution of (S12).
\end{itemize}
So, we need to count the number of solutions of (S1), (S2), (S3), (S5), (S7) and (S11).

Let $a,b \in \ffs{2^s}$. Then,  the first equation of (S3) cannot be satisfied (recall that $f$ acts on $\ff{2^s}$ and $g$ over $\ff{2^n}\setminus\ff{2^s}$). The same for (S5) and (S7), while for (S11) the second equation is not possible.
So, only (S1) and (S2) could admit solutions, and thus we have at most $\gb_f(a,b)+\gb_g(a,b)$ solutions, where $\gb_f(a,b)$ is computed over $\ff{2^s}$. 

Let $b \in \ffs{2^s}$ and $a \notin \ff{2^s}$. As before,  (S3), (S5) and (S7) have no solutions. The same for (S1) since $x+a$ is not in $\ff{2^s}$. Let us check the solutions of (S11). 

Denoting by $y=x+\ga$, we would obtain 
$$
\begin{cases}
D_\ga f(x)=b\\
D_\ga g(x+a)=b,
\end{cases}
$$
where $\ga\in\ff{2^s}$. So $x+a\in\ff{2^n}\setminus\ff{2^s}$ would be a solution for $D_\ga g(x+a)=b$ with $\ga$ and $b$ in $\ff{2^s}$, which is not possible from hypothesis (H3). Then, only (S2) admits solutions, implying at most $\gb_g(a,b)$ solutions.

Let $b \notin \ff{2^s}$ and $a \in \ffs{2^s}$. Then, the first equation in (S1) is not admissible. Similar for (S11). Moreover, (S5) does not admit solutions since $x$ should be in $\ff{2^s}$ and $x+a$ not. The same for (S7).

Consider now (S3). 
Denoting by $y=x+\ga$, then we would obtain 
$$
\begin{cases}
f(x)+g(x+\ga)=b\\
f(x+a)+g(x+a+\ga)=b.
\end{cases}
$$
From the proof of Theorem \ref{th:main} we can have only one solution of $f(x)+g(x+\ga)=b$ with $x\in\ff{2^s}$ ($\ga\notin\ff{2^s}$). Then, $x=x+a$, implying $a=0$. So, also in this case, only (S2) can admit solutions.

Now, let $a,b \notin \ff{2^s}$. Then, with same arguments as above, (S1) and (S11) are not admissible. Also (S3) is not possible since $x$ and $x+a$ should be both in $\ff{2^s}$. Then, we need to analize the solutions of  (S5) and (S7). Consider (S5) and denote $y=x+\ga$. System (S5) is given by
$$
\begin{cases}
f(x)+g(x+\ga)=b\\
g(x+a)+f(x+a+\ga)=b,
\end{cases}
$$
where $\ga\in\ff{2^n}\setminus\ff{2^s}$.
From the proof of Theorem \ref{th:main} we have at most one solution $x$ in $\ff{2^s}$ for $f(x)+g(x+\ga)=b$. This implies $x=x+a+\ga$, and thus $\ga=a$. So, we have only one possible pair $(x,x+\ga)$ which can be a solution of (S5), and so also for (S6).

Consider, now, (S7). We get the system
$$
\begin{cases}
f(x)+g(x+\ga)=b\\
g(x+a)+g(x+a+\ga)=b.
\end{cases}
$$
Since from Theorem \ref{th:main} we have at most one solution $x$ in $\ff{2^s}$ for $f(x)+g(x+\ga)=b$, the number of solutions of (S7) are at most $N=|\{\ga\notin\ff{2^s}\mid \, f(x)+g(x+\ga)=b \text{ has solutions $x\in \ff{2^s}$}\}|$. This implies, also, at most $N$ solutions for (S8), (S9) and (S10). To conclude, (S2) admits at most $\gb_g(a,b)$ solutions.
\qed\end{proof}
\begin{remark}
Since $f$ and $g$ are permutations for any $x\in\ff{2^s}$ we have a unique $\ga\notin\ff{2^s}$ such that $f(x)+g(x+\ga)=b$. In particular, when $s=2$, we can have at most $4$ solutions for (S7), and when it admits $4$ solutions, (S5) has no solution. Thus, for $s=2$, the boomerang uniformity of the functions as in Theorem \ref{th:main} is upper bounded by $\gb_g+16$.
\end{remark}

In the following we consider the case when $g(x)$ is a 4-uniform Gold function, and $f(x)=g(x)+\g$ for some $\g\in \ffs{2^s}$, i.e. $F(x)=g(x)+\g+\g(x^{2^s}+x)^{2^n-1}$. Since $g$ is a power function, $F(x)=g(x)+\g+\g(x^{2^s}+x)^{2^n-1}$ is equivalent to $F'(x)=g(x)+1+(x^{2^s}+x)^{2^n-1}=\frac{1}{\g}F(g^{-1}(\g)\cdot x)$, so we can suppose that $\g=1$. Note that, such functions are differentially 6-uniform from Theorem~\ref{th:6diff1}.


\begin{proposition}\label{cor:boom}
Let $n=sm$ with $s$ even such that $s/2$ and $m$ are odd. Let $k$ be an integer such that $\gcd(k,n)=2$ and $g(x)=x^{2^k+1}$. Let 
$$F(x)=x^{2^k+1}+1+(x^{2^s}+x)^{2^n-1}=\begin{cases}
x^{2^k+1}+1&\mbox{ if $x\in\ff{2^s}$}\\
x^{2^k+1}&\mbox{ if $x\notin\ff{2^s}$}.
\end{cases}$$
Then, for any $a,b\in\ffs{2^n}$ we have
$$
\gb_F(a,b)\le\begin{cases}
4& \text{if $a\in\ffs{2^s}$ or $b\in\ffs{2^s}$}\\
22& \text{if $a,b\notin \ff{2^s}$}.
\end{cases}
$$
\end{proposition}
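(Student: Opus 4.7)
The plan is to apply Proposition~\ref{prop:boom} to $F$, then refine using two Gold-specific inputs: a sharp bound $\gb_g(a,b)\le 4$ for the Gold function on $\ff{2^n}$, and a counting bound on the quantity $N$ appearing in Proposition~\ref{prop:boom}. The hypotheses of Proposition~\ref{prop:boom} hold because $f(x)=g(x)+1\in\ff{2^s}[x]$ and $g\in\ff{2^s}[x]$, condition (H3) is exactly Lemma~\ref{lm:1}, and Gold is $4$-uniform on both $\ff{2^n}$ and $\ff{2^s}$ (using $\gcd(k,s)=2$ together with the parity hypotheses).

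The key technical step is showing $\gb_g\le 4$ for $g(x)=x^{2^k+1}$. For any nonzero $a,b$, the boomerang system $D_\ga g(x+a)=D_\ga g(x)=b$ forces $D_aD_\ga g=a\ga^{2^k}+a^{2^k}\ga=0$, i.e.\ $\ga/a\in\ffs{4}$, leaving three candidates $\ga\in\{a,a\go,a\go^2\}$, each contributing $0$ or $4$ solutions. Writing the linear part $L_\ga(x)=\ga^{2^k}x+\ga x^{2^k}$ as $\ga^{2^k+1}(y+y^{2^k})$ with $y=x/\ga$, we get $\im(L_\ga)=\ga^{2^k+1}V$, where $V=\im(T_k)=\{y:\Tr_2^n(y)=0\}$ for $T_k(y)=y+y^{2^k}$. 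Setting $c=b/a^{2^k+1}$ and using $\go^{2^k}=\go$ for $k$ even, the three solvability conditions reduce to $c\in V+1$, $c\in V+\go^2$, $c\in V+\go$. Because $n/2=(s/2)m$ is odd, $\Tr_2^n(\go^j)=(n/2)\,\go^j=\go^j\ne 0$ for $j=0,1,2$, so $\{1,\go,\go^2\}$ map to three distinct nonzero classes in $\ff{2^n}/V\cong\ff{4}$; the three cosets $V+1,V+\go,V+\go^2$ are therefore pairwise disjoint, at most one of the three conditions can hold, and $\gb_g(a,b)\le 4$.

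With $\gb_g\le 4$ in hand, the cases where $a$ or $b$ lies in $\ffs{2^s}$ are quick. When exactly one of $a,b$ lies in $\ffs{2^s}$, the proof of Proposition~\ref{prop:boom} directly gives $\gb_F(a,b)\le\gb_g(a,b)\le 4$. When $a,b\in\ffs{2^s}$, since $f-g\equiv 1$ on $\ff{2^s}$ the constants cancel from every pairwise sum $f(x)+f(y)=g(x)+g(y)$, so systems (S1) and (S2) describe Gold-boomerang pairs in $\ff{2^s}^2$ and $(\ff{2^n}\setminus\ff{2^s})^2$ respectively; these are disjoint subsets of the full Gold-boomerang set at $(a,b)$, and by Proposition~\ref{prop:boom} no other system contributes, so $\gb_F(a,b)\le\gb_g(a,b)\le 4$.

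For $a,b\notin\ff{2^s}$, Proposition~\ref{prop:boom} yields $\gb_F(a,b)\le\gb_g(a,b)+4N+2$, where $N$ counts $\ga\notin\ff{2^s}$ such that $D_\ga g(x)=b+1$ has a solution $x\in\ff{2^s}$. For each $x\in\ff{2^s}$ this equation determines $\ga=g^{-1}(g(x)+b+1)+x$ uniquely, and Theorem~\ref{th:main} prevents two distinct $x\in\ff{2^s}$ from producing the same $\ga\notin\ff{2^s}$, so $N\le|\ff{2^s}|$; in particular for $s=2$ we have $N\le 4$, yielding $\gb_F(a,b)\le 4+16+2=22$. I expect the main obstacle to be the coset-disjointness argument in the second paragraph, which is precisely where the parity hypotheses on $s/2$ and $m$ enter essentially; the remaining steps are bookkeeping from the proof of Proposition~\ref{prop:boom}.
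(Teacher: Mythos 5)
Your reduction to Proposition~\ref{prop:boom}, your handling of the cases where $a$ or $b$ lies in $\ffs{2^s}$, and your self-contained derivation of $\gb_g(a,b)\le 4$ for the $4$-uniform Gold function (which the paper simply quotes from \cite{BOCA}) are all sound. The gap is in the last case, $a,b\notin\ff{2^s}$: your bound on the quantity $N$ is $N\le|\ff{2^s}|=2^s$, and you only obtain $N\le 4$ (hence the claimed $22$) when $s=2$. The proposition, however, is asserted for every admissible $s$ (any even $s$ with $s/2$ odd and $m$ odd, e.g.\ $s=6$ inside $n=18$), and for $s>2$ your estimate gives only $\gb_F(a,b)\le 4+4\cdot 2^s+2$, which is far weaker than $22$. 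The injectivity of $x\mapsto\ga$ coming from Theorem~\ref{th:main} cannot close this, because it bounds the number of admissible $\ga$'s by the number of $x$'s, i.e.\ by $2^s$.

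The missing idea is to exploit the \emph{second} equation of System (S7) rather than only the first. A solution of (S7) satisfies both $D_\ga g(x)=b+1$ (using $f=g+1$ on $\ff{2^s}$) and $D_\ga g(x+a)=b$; adding these gives $D_aD_\ga g(x)=1$. Since $g$ is quadratic, $D_aD_\ga g(x)=a^{2^k}\ga+a\ga^{2^k}$ does not depend on $x$, and the set of $\ga$ with $a^{2^k}\ga+a\ga^{2^k}=1$ is either empty or a coset $\gr+a\ff{4}$ of the kernel $a\ff{4}$, hence has at most $4$ elements. So at most four values of $\ga$ can contribute to (S7), each with at most one $x\in\ff{2^s}$; the same count transfers to (S8)--(S10) by the bijections listed in the proof of Proposition~\ref{prop:boom}, and together with at most $2$ solutions from (S5)--(S6) and $\gb_g(a,b)\le 4$ from (S2) this yields $22$ for every admissible $s$. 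Incidentally, the step you single out as the main obstacle --- the pairwise disjointness of the cosets $V+1$, $V+\go$, $V+\go^2$ via $\Tr_2^n(\go^j)=(n/2)\go^j\ne 0$ --- is correct as written and is not where the difficulty lies.
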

\begin{proof}
First of all, we can note that (S1) and (S2) in this case coincide. So for the case $a,b\in\ffs{2^s}$ we have $\gb_g(a,b)$ solutions, where $\gb_g(a,b)$ is computed on $\ff{2^n}$, and the boomerang uniformity of the Gold function $x^{2^k+1}$, over $\ff{2^n}$, is equal to $4$ (see for instance \cite{BOCA}). 

System (S7) in this case is given by
$$
\begin{cases}
D_\alpha f(x)=b+1\\
D_\alpha f(x+a)=b,
\end{cases}
$$
which is equivalent to
$$
\begin{cases}
D_\alpha f(x)=b+1\\
D_aD_\alpha f(x)=1.
\end{cases}
$$
Since $f$ is quadratic the second derivative $D_aD_\ga f(x)=a^{2^k}\ga+a\ga^{2^k}$ does not depend on $x$. So, for System (S7) the equation $D_aD_\ga f(x)=1$ is satisfied if and only if $\ga\in\gr+a\ff{4}$, where $\gr$ is such that $a^{2^k}\gr+a\gr^{2^k}=1$. This, implies that we can have at most four $\ga$'s admitting solutions in $\ff{2^s}$ for the equation $D_\ga f(x)=b$.
\qed\end{proof}
%


For the functions given in Table \ref{tab:6} and \ref{tab:10}, we give the boomerang uniformity in Table \ref{tab:b1} and \ref{tab:b2}, respectively. 
From the analysis given in Proposition~\ref{prop:boom}, we need to compute, principally, the number of solutions of Systems (S2), (S5) and (S7). 

Note that for $A(x)=x+\go $, since $x^5=x^{-1}$ over $\ff{4}$, we obtain the function studied in Proposition~\ref{cor:boom}.

\begin{table}[h!]\caption{Boomerang uniformity of the CCZ-inequivalent permutations from Corollary~\ref{cor:1} over $\ff{2^6}$}\label{tab:b1}
\centering\begin{tabular}{|cccccc|}
\hline
$A(x)$: & $x$ & $x+\go $ & $\go x^2+\go $& $\go x$ & $\go^2x^2+\go$\\
\hline
 $\gb_F$:& 4 &12& 12& 16& 12\\
\hline
\end{tabular}
\end{table}



\begin{table}[h!]\caption{Boomerang uniformity of the CCZ-inequivalent permutations from Corollary~\ref{cor:1} over $\ff{2^{10}}$}\label{tab:b2}
\centering\begin{tabular}{|cccccc|}
\hline
$A(x)$: & $x$ & $x+\go $ & $\go x^2+\go $& $\go x$ & $\go^2x^2+\go$\\
\hline
 $\gb_F$:& 4 & 8 &8 & 8& 8\\
\hline
\end{tabular}
\end{table}

\section*{Acknowledgements}
The research of this paper was supported by Trond Mohn Foundation.

\providecommand{\bysame}{\leavevmode\hbox to3em{\hrulefill}\thinspace}
\providecommand{\MR}{\relax\ifhmode\unskip\space\fi MR }
\providecommand{\MRhref}[2]{%
  \href{http://www.ams.org/mathscinet-getitem?mr=#1}{#2}
}
\providecommand{\href}[2]{#2}

\end{document}